\begin{document}

%
\title{Recovering Multiplexing Loss Through Successive Relaying Using Repetition Coding}
%
%
\author{Yijia~Fan, Chao~Wang, John~Thompson, H. Vincent~Poor
\thanks{Manuscript submitted June 2006, revised Feb 2007}
\thanks{Y. Fan's work was supported by the EPSRC Grant GR/S58782/01, UK and C. Wang's work was supported by the UK Mobile Virtual Centre of Excellence (VCE) Core 4 Project (www.mobilevce.com).}
\thanks{Y. Fan was with the Institute for Digital Communications,
University of Edinburgh. He is now with the Department of Electrical
Engineering, Princeton University, Princeton, NJ, 08544, USA.
(e-mail: yijiafan@princeton.edu)}
\thanks{C. Wang and J. Thompson are with the Institute for Digital Communications, University of Edinburgh, Edinburgh, EH9 3JL, UK. (e-mail: y.fan@ed.ac.uk, chao.wang@ed.ac.uk, john.thompson@ed.ac.uk)}
\thanks{H. V. Poor is with Department of Electrical Engineering,
Princeton University, Princeton, NJ 08544 USA (email:
poor@princeton.edu)}}
%
%
%
\markboth{Submitted to IEEE Transactions on Wireless
Communications}{Fan \MakeLowercase{\textit{et al.}}: Recovering
Multiplexing Loss through Successive Relaying}
%


\maketitle

\begin{abstract}
In this paper, a transmission protocol is studied for a two relay
wireless network in which simple repetition coding is applied at the
relays. Information-theoretic achievable rates for this transmission
scheme are given, and a space-time V-BLAST signalling and detection
method that can approach them is developed. It is shown through the
diversity multiplexing tradeoff analysis that this transmission
scheme can recover the multiplexing loss of the half-duplex relay
network, while retaining some diversity gain. This scheme is also
compared with conventional transmission protocols that exploit only
the diversity of the network at the cost of a multiplexing loss. It
is shown that the new transmission protocol offers significant
performance advantages over conventional protocols, especially when
the interference between the two relays is sufficiently strong.
\end{abstract}


%

\section{Introduction}
%
%
%
%


\subsection{Background}

In the past few years, cooperative diversity protocols
\cite{1,3,4,5,6,7,8,9,10,11} have been studied intensively to
improve the diversity of relay networks. In most of the prior work,
a time-division-multiple-access (TDMA) half-duplex transmission is
assumed and the most popular transmission protocol (e.g.\cite{3})
can be described in two steps: In the first step, the source
broadcasts the information to all the relays. The relays process the
information and forward it to the destination (in either the same or
a different time slot) in the second step, while the source remains
\emph{silent}. The destination performs decoding based on the
message it received in both steps. We refer to this protocol as the
\emph{classic protocol} throughout the paper.

For digital relaying, where the relay decodes, re-encodes and
forwards the message, the simplest coding method is \emph{repetition
coding}\cite{3,4}, where the source and all the relays use the
\emph{same} codebook. This scheme can achieve full diversity and is
also practically implementable. Any capacity achieving AWGN channel
codes can be used to approach the performance limit of such schemes.
The disadvantage of this scheme is that it requires the relays to
transmit in orthogonal time slots in the second step in order for
the destination to combine effectively the relays' signals. This
will result in a significant \emph{multiplexing loss} compared with
direct transmission. Space-time codes, which were originally applied
in multiple-input multiple-output (MIMO) systems, have been
suggested for use in relay networks (e.g. \cite{4,anghel06}). Here
all the relays can transmit the signals simultaneously to the
destination in the second step and the multiplexing factor is
recovered to $1/2$. However, this still causes spectral inefficiency
for the high signal to noise ratio (SNR) region. In fact, the
network capacity in this scenario will become only \emph{half} of
the non-relay network capacity for high SNR, even assuming that the
message is always correctly decoded at the relays.

To fully recover the multiplexing loss, much more complicated
protocols and coding strategies have been proposed in \cite{7},
where \emph{new independent random codebooks} are used at the relays
to transmit the same information as they received from the source,
while the relays can adjust their listening times dynamically in the
first step. Those approaches, which are based on Shannon's random
coding theory, are currently theoretical and extremely difficult to
realize in reality. Practical coding design for relay networks often
follows a quite different approach from these theoretical
investigations (see \cite{janani04,stefanov05,hunter06,razaghi06}
for example).

Instead of using complicated coding schemes, a protocol using
\emph{repetition coded} relaying was proposed in \cite{5}(see also
\cite{wittneben03}) to avoid multiplexing loss for single relay
channels. In this protocol, denoted as protocol I in \cite{5}, the
source transmits a different message in the second time slot, so
that the destination sees a collision of messages from both the
relay and the source in the second time slot. Although multiplexing
loss is recovered due to the continuous transmission of the source,
diversity gain is lost due to the fact that the source transmission
in time slot two is not relayed to the destination.

\subsection{Contribution of the Paper}


In this paper, we study a transmission protocol based on protocol I
in \cite{5} for digital relaying. By adding an additional relay in
the network and making the two relays transmit in turn, we show that
multiplexing loss can be effectively recovered while
diversity/combining gain can still be obtained. Specifically, $L$
codewords can be transmitted in $(L+1)$ time slots with joint
decoding at the destination. Our analysis is based on two different
scenarios: (a) The instantaneous channel state information (CSI) is
known to the receiver and can be fed back to the transmitter; (b)
The instantaneous CSI is known to the receiver but is not available
to the transmitter. We make the following observations in this paper
for the proposed protocol:
\begin{itemize}
\item For scenario (a), we derive the achievable rates for this protocol when
repetition coding is assumed to be used at the relays. We show that
in certain scenarios, the capacity for the network becomes that for
a MIMO system with $L$ inputs and $L+1$ outputs and has a
multiplexing gain of $L/(L+1)$. Assuming that the relays correctly
decode the signal, we show that the proposed protocol offers
significant capacity performance advantages over the classic
protocol due to its improved multiplexing gain.
\item We also discuss the source-relay channel conditions and the
interference that arises between the relays. We derive the required
channel constraints for the optimal performance of such a protocol
as a function of SNR, as well as the achievable rates for different
channel conditions. We believe these analyses offer strong insights
for \emph{adaptive protocol design}, where relaying and direct
transmission can be combined. Based on our network models we show
that the proposed protocol, combined with the direct transmission
protocol, can also give a significant capacity performance advantage
over the classic protocol, especially when the two relays are
located close to each other.
\item We propose a practical low-rate feedback V-BLAST decoding algorithm
which approaches the theoretical achievable rates for a slow fading
environment.
\item For scenario (b), we analyze the diversity multiplexing tradeoff for such a
network when $L$ is large, conditioned on the signals being
correctly decoded at the relays. We show that in this scenario the
network mimics a multiple-input single-output (MISO) system with two
transmit and one receive antennas. This means it can offer a maximal
diversity gain of two with almost no multiplexing loss.
\end{itemize}

\subsection{Relations to Previous and Concurrent Work}

The idea for successive relaying first appeared in
\cite{Oechtering04}. This study was focused on amplify-and-forward
relaying and did not offer insight into the achievable rates and
diversity multiplexing tradeoff for such relaying methods. The
scheme has been further analyzed for amplify-and-forward relaying in
\cite{se2,yang061} and \cite{yang062}. In \cite{se2} capacity
analysis was performed assuming that the direct link is ignored.
Hence no diversity can be obtained at the destination. The analysis
in very recent papers \cite{yang061} and \cite{yang062} make the
assumption that the relays are isolated, when there are more than
one relay. Also the relay-to-relay link in \cite{yang061} and
\cite{yang062} acts in a different way from that in our work due to
a different relaying mode.


A further work \cite{se1} also analyzes the capacity for such
schemes when digital relaying is used. One major difference between
\cite{se1} and our work is that the direct link is ignored in
\cite{se1} while it is considered in this paper. Therefore the
analysis becomes different for the following reasons. First, the
scheme in \cite{se1} does not offer any \emph{cooperative diversity}
gain, which is a very important benefit that the relay can offer. We
will show in this paper that a diversity gain of 2 can be obtained
by considering the direct link. Secondly, this also results in very
different characteristics in terms of achievable rates and
signalling methods due to the additional interference and diversity
that the direct link introduces. In this paper we
\emph{specifically} analyze the network capacity under different
channel and interference constraints, which were not given in
\cite{se1}. Also the use of the V-BLAST decoder is unique to our
paper. Finally, we note that the capacity analysis discussed in our
paper in fact \emph{contains} the scenario in \cite{se1} as a
special case, i.e. the same capacity values as in \cite{se1} is
obtained on assuming that the channel coefficient for the direct
link is zero in our model. Therefore our analysis is more general,
and the adaptive protocols introduced here fit better in the context
of previous work on this topic \cite{3}.

\section{Protocol Design}

We assume a four-node network model, where one source, one
destination and two relays exist in the network. For simplicity, we
denote the source as $S$, the destination as $D$, and the two relays
as $R1$ and $R2$. We split the source transmission into different
frames, each containing $L$ codewords denoted as $s_l$. These $L$
codewords are transmitted continuously by the source, and are
decoded and forwarded by two relays successively in turn. Before
decoding $L$ codewords, the destination waits for $L+1$ transmission
time slots until all $L$ codewords are received, from both direct
link and the relay links. It then performs joint decoding of all $L$
codewords. The specific steps  for each transmission (reception)
time slot for every frame are described as follows:

\emph{Time slot 1}: $S$ transmits $s_1$. $R1$ listens to $s_1$ from
$S$. $R2$ remains silent. $D$ receives $s_1$.

\emph{Time slot 2}: $S$ transmits $s_2$. $R1$ decodes, re-encodes
and forwards $s_1$. $R2$ listens to $s_2$ from $S$ while being
interfered with by $s_1$ from $R1$. $D$ receives $s_1$ from $R1$ and
$s_2$ from $S$.

\emph{Time slot 3}: $S$ transmits $s_3$. $R2$ decodes, re-encodes
and forwards $s_2$. $R1$ listens to $s_3$ from $S$ while being
interfered with by $s_2$ from $R2$. $D$ receives $s_2$ from $R2$ and
$s_3$ from $S$. The progress repeats until \emph{Time slot L}.

\emph{Time slot L+1}: $R1$ (or $R2$) decodes, re-encodes and
forwards $s_L$. $D$ performs a joint decoding algorithm to decode
all $L$ codewords received from the $L+1$ transmission time slots.

The transmission schedule for the first three time slots for each
frame is shown in Fig. \ref{slot}. Compared with direct
transmission, the multiplexing ratio for this protocol is clearly
$L/(L+1)$, which approaches 1 for large frame lengths $L$. Unlike
protocol III in \cite{5}, the destination always receives two copies
of each codeword, from both the direct and relay link (a delayed
version). This implies that diversity gain can still be realized by
this protocol.

The major issue for this protocol to be effectively implemented is
to tackle the co-channel interference at the relays and the
destination. As described above, except for the first and last time
slot, the relays and the destination always observe collisions from
different transmitters (i.e. the source or the relays). Suppression
of the interference thus becomes a major problem. We will discuss
this problem further in the next two sections.

\section{Achievable Rates}

When the channel information can be fed back to the transmitter, one
can implement adaptive coding and modulation to achieve the system
performance limit. Thus capacity is a key measurement in this
scenario. We assume a slow, flat, block fading environment, where
the channel remains static for each message frame transmission (i.e.
$L+1$ time slots). Note that while this assumption is made for
presentation simplicity, the capacity analysis can also be applied
to a more relaxed flat block fading scenario, e.g. fast fading where
\emph{each channel coefficient changes for each time slot}. We also
assume that each transmitter transmits with equal power (i.e. no
power allocation or saving among the source and relays). We denote
the channel coefficient between node $a$ and $b$ by $h_{a,b}$, which
may contain path-loss, Rayleigh fading, and lognormal shadowing. For
simplicity, we denote the capacity function $\log _2 \left( {1 + x}
\right)$ by $C\left( x \right)$, in which the parameter $SNR$
denotes the ratio of signal power to the noise variance at the
receiver.

\subsection{Source-Relay Link}
\label{3a}

In order for the relays to decode the signals correctly, the source
transmission rate should be below the Shannon capacity of the
source-relay channels. We express this constraint as
\begin{equation}
R_i  \le C\left( {\left| {h_{S,r_i } } \right|^2 SNR } \right), 1
\le i \le L \label{srcap}
\end{equation}
where $r_i$ is the $i$th element in the $L$ dimensional relay index
vector
\begin{equation}
{\bf{r}} = {\left[ {\begin{array}{*{20}c}
   {R1} & {R2} & {R1} & {R2} & {R1} \cdots   \\
\end{array}} \right]},
\end{equation}
and $R_i$ denotes the achievable rate for $s_i$.

\subsection{Interference Cancellation Between Relays}
\label{3b} One major defect of the protocol is the interference
generated among the relays when one relay is listening to the
message from the source, while the other relay is transmitting the
message to the destination. This situation mimics a two user
Gaussian interference channel \cite{12}, where two transmitters (the
source and one of the relays) are transmitting messages each
intended for one of the two receivers (the other relay and the
destination). The optimal solution for this problem is still open.
We concern ourselves only with suppressing the interference at the
relays at this stage (interference suppression at the destination
will be left until all $L$ signals are transmitted). We give a very
simple decoding criterion for the relays: if the interference
between relays is stronger than the desired signal, we decode the
interference and subtract it from the received signals before
decoding the desired signal. Otherwise, we decode the signal
directly while treating the interference as Gaussian noise.

The achievable rate is therefore based on different channel
conditions between the source to relay and the relay to destination
links. For example, when $R1$ transmits $s_1$ while $R2$ is
receiving $s_2$, if $\left| {h_{R1,R2} } \right| \succ \left|
{h_{S,R2} } \right|$, $R2$ first decodes $s_1$, subtracts it (as the
interference), then decodes $s_2$ (as the desired signal).
Therefore, besides the rate constraint proposed in the previous
subsection, there will be an additional rate constraint for $s_1$ to
be correctly decoded at $R2$, which can be expressed as follows:
\begin{equation}
R_1 \le C\left( {\frac{{\left| {h_{R1,R2} } \right|^2 SNR}}{{1 +
\left| {h_{S,R2} } \right|^2 SNR}}} \right). \label{interrelay1}
\end{equation}
Otherwise if $s_2$ is decoded directly, treating $s_1$ as noise, the
achievable rate for $s_2$ is further constrained and can be
expressed as
\begin{equation}
R_2  \le C\left( {\frac{{\left| {h_{S,R2} } \right|^2 SNR}}{{1 +
\left| {h_{R1,R2} } \right|^2 SNR}}} \right). \label{rrcap}
\end{equation}
Note that this decoding criterion applies from the second time slot
to the $L$th time slot when transmitting each frame. In slot $i$,
inequality (\ref{interrelay1}) can be adapted to a constraint on $R
_{i-1}$ and inequality (\ref{rrcap}) can be adapted to a constraint
on $R _i$.

\subsection{Space-Time Processing at the Destination}

If the transmission rate is below the Shannon capacity proposed by
the previous two subsections, the relays can successfully decode and
retransmit the signals for all the $L+1$ time slots. The input
output channel relation for the relay network is equivalent to a
multiple access MIMO channel, which can be expressed as
\begin{equation}
{\bf{y}} = \sqrt{SNR} \underbrace {\left[ {\begin{array}{*{20}c}
   {h_{S,D} } & 0 & 0 & 0 & 0  \\
   {h_{r_1 ,D} } & {h_{S,D} } & 0 & 0 & 0  \\
   0 & {h_{r_2 ,D} } & {h_{S,D} } & 0 & 0  \\
   0 & 0 &  \ddots  &  \ddots  & 0  \\
   0 & 0 & 0 & {h_{r_{L - 1} ,D} } & {h_{S,D} }  \\
   0 & 0 & 0 & 0 & {h_{r_L ,D} }  \\
\end{array}} \right]}_{\bf{H}}{\bf{s}} + {\bf{n}},
\label{vstmimo}
\end{equation}
where $\bf{y}$ is the $(L+1) \times 1$ received signal vector,
$\bf{s}$ is the $L \times 1$ transmitted signal vector and
${\bf{n}}$ is an $(L+1) \times 1$ complex circular additive white
Gaussian noise vector at the destination. Unlike conventional
multiple access MIMO channels, the dimensions of $\bf{y}$, $\bf{s}$
and ${\bf{n}}$ are expanded in the time domain rather than the space
domain. However, the capacity region should be the same, which can
be expressed as follows \cite{13}:
\begin{eqnarray}
R_k  &\le& \log _2 \left( {\det \left( {{\bf{I}} + {\bf{h}}_k
{\bf{h}}_k^H SNR} \right)} \right),\label{rk}\\
R_{k_1 }  + R_{k_2 }  &\le& \log _2 \left( {\det \left( {{\bf{I}} +
SNR\left( {{\bf{h}}_{k_1 } {\bf{h}}_{k_1 }^H  + {\bf{h}}_{k_2 }
{\bf{h}}_{k_2 }^H } \right)} \right)} \right),\\
&...& \nonumber \\
\sum\limits_{k = 1}^L {R_k }  &\le& \log _2 \left( {\det \left(
{{\bf{I}} + {\bf{HH}}^H SNR} \right)} \right),\label{sumrk}
\end{eqnarray}
where ${\bf{h}}_k$ denotes the $k$th column of $\bf{H}$. As it is
extremely complicated to give an exact description for the rate
region of each signal when $L>2$, we will concentrate only on
inequalities (\ref{rk}) and (\ref{sumrk}) to give a sum capacity
upper bound for the network in the next subsection. However, as will
be shown later in the paper, this bound is extremely tight and is
achievable when a space-time V-BLAST algorithm is applied at the
destination to decode the signals in a slow fading scenario.

\subsection{Network Achievable Rates}

Combining the transmission rate constraints proposed by the previous
three subsections, we provide a way of calculating the network
capacity upper bound for the proposed protocol. First, we impose a
rate constraint $R_i$ for each transmitted codeword $s_i$. In the
first time slot (initialization), we write
\begin{equation}
R_{S,r_1 }  \le C\left( {\left| {h_{S,r_1 } } \right|^2 SNR}
\right). \label{ini}
\end{equation}

For $(i+1)$th time slot (for $1 \le i \le L-1$), we calculate the
rate constraints based on the decoding criterion at the relays. The
calculation can be written as a logical \texttt{if} statement as
follows:

\texttt{if} $h_{R1,R2}  \succ h_{S,r_{i + 1} }$,

\begin{eqnarray}
R_i  \le \min \left( {C\left( {\frac{{\left| {h_{R1,R2} } \right|^2
 SNR}}{{1 + \left| {h_{S,r_{i + 1} } } \right|^2 SNR }}} \right),R_{S,r_i }
,C\left( {\left| {h_{S,D} } \right|^2 SNR + \left| {h_{r_i ,D} }
\right|^2 } SNR \right)} \right), \nonumber \\ R_{S,r_{i + 1} } \le
C\left( {\left| {h_{S,r_{i + 1} } } \right|^2 } SNR
\right);\label{ryin}
\end{eqnarray}

\texttt{else}
\begin{eqnarray}
R_i  \le \min \left( {R_{S,r_i } ,C\left( {\left| {h_{S,D} }
\right|^2 SNR + \left| {h_{r_i ,D} } \right|^2 SNR} \right)}
\right),\nonumber \\ R_{S,r_{i + 1} }  \le C\left( {\frac{{\left|
{h_{S,r_{i + 1} } } \right|^2 SNR}}{{1 + \left| {h_{R1,R2} }
\right|^2 SNR}}} \right);\label{ryin1}
\end{eqnarray}

\texttt{end}.

Note that the term $C\left( {\left| {h_{S,D} } \right|^2 SNR +
\left| {h_{r_i ,D} } \right|^2 SNR} \right)$ represents the
constraint expressed by (\ref{rk}). The purpose of the \texttt{if}
statement is to select the decoding order at the relay and to decide
whether equation (\ref{interrelay1}) or (\ref{rrcap}) is the correct
constraint to apply.

In the $(L+1)$th time slot, we have
\begin{equation}
R_L  \le \min \left( {R_{S,r_L } ,C\left( {\left| {h_{S,D} }
\right|^2 SNR + \left| {h_{r_L ,D} } \right|^2 SNR} \right)}
\right).\label{ryin2}
\end{equation}

Combining these constraints with the sum capacity constraint
expressed by (\ref{sumrk}), an achievable rate per time slot can
then be written as
\begin{equation}
C_{ach}  = \frac{1}{L+1} \min \left( \mathop {\max }\limits_{R_1
\cdots R_L } \left\{ {\sum\limits_{i = 1}^L {R_i } } \right\},\log
_2 \left( {\det \left( {{\bf{I}} + {\bf{HH}}^H SNR} \right)} \right)
\right). \label{capbound}
\end{equation}
The first term in the $\min$ function comes from the calculation
described above, the second one comes from equation (\ref{sumrk}).

\subsection{Interference Free Transmission}
\label{intf} From the above discussion of the proposed protocol, it
is clear that the interference between relays is one major and
obvious factor that can significantly degrade the network capacity
performance. However, it has been shown that for a Gaussian
interference network, if the interference is sufficiently strong,
the network can perform the same as an interference free network
\cite{14}. Specifically, for the scenario discussed in our model, if
the interference between relays (i.e. the value of $\left| h_{R1,R2}
\right|$) is so large that the following inequality holds
\begin{equation}
 \frac{{\left| {h_{R1,R2} } \right|^2 SNR}}{{1 + \left| {h_{S,r_{i +
1} } } \right|^2 SNR}} \ge \min \left( {\left| {h_{S,r_i } }
\right|^2 SNR,\left( \left| {h_{S,D} } \right|^2 + \left| {h_{r_i
,D} } \right|^2 \right) SNR}\right), i = 1 \cdots L, \label{int}
\end{equation}
then the relay can always correctly decode the interference and
subtract it before decoding the desired message, without affecting
the overall network capacity. In this situation, the capacity
analysis for the $i$th ($1 \le i \le L$) transmitted signal as
expressed by (\ref{ryin})-(\ref{ryin2}) can be simplified to
\begin{equation}
R_i \le \min \left( C\left( {\left| {h_{S,r_{i} } } \right|^2 SNR}
\right), C\left( {\left(\left| {h_{S,D} } \right|^2  + \left|
{h_{r_i ,D} } \right|^2 \right)SNR} \right) \right).
\label{interfree}
\end{equation}
It is obvious that the rate bounds provided by (\ref{interfree}) are
significantly larger than those provided by
(\ref{ryin})-(\ref{ryin2}).

From the above capacity analysis, it can also be seen that the
quality of the source to relay link (i.e. $h_{S,r_{i} }$) is also an
important factor that may constrain the network capacity. This has
also been justified and discussed in many papers (e.g.
\cite{3,4,10,7,9}). Similar to this previous work, we suggest that
$h_{S,r_{i} }$ should be compared with $h_{S,D}$ or $h_{r_i ,D}$
before deciding to relay or not. For the interference free scenario
discussed here, the constraint becomes
\begin{equation}
\left| {h_{S,r_{i} } } \right|^2 \ge \left| {h_{S,D} } \right|^2  +
\left| {h_{r_i ,D} } \right|^2, 1 \le i \le L. \label{sr}
\end{equation}
The capacity expressed by (\ref{capbound}) can be simplified to
\begin{equation}
C_{ach}  = \frac{1}{L+1}\min \left( {\sum\limits_{i = 1}^L {C \left(
\left( {\left| {h_{S,D} } \right|^2  + \left| {h_{r_i ,D} }
\right|^2 } \right)SNR \right)} ,\log _2 \left( {\det \left(
{{\bf{I}} + {\bf{HH}}^H SNR} \right)} \right)} \right).
\label{capbound1}
\end{equation}
By Jensen's inequality \cite{14} it is clear that
\begin{equation}
\sum\limits_{i = 1}^L {C \left( \left( {\left| {h_{S,D} } \right|^2
+ \left| {h_{r_i ,D} } \right|^2 } \right) SNR \right) } \ge \log _2
\left( {\det \left( {{\bf{I}} + {\bf{HH}}^H SNR} \right)} \right).
\end{equation}
Therefore the rate is equal to the MIMO channel capacity equation
with a multiplexing scaling factor:
\begin{equation}
C_{ach}  = \frac{1}{L+1} \log _2 \left( {\det \left( {{\bf{I}} +
{\bf{HH}}^H SNR} \right)} \right). \label{mimo}
\end{equation}
This result shows that the proposed protocol can offer the best
capacity performance conditioned on (\ref{int}) and (\ref{sr}),
which guarantees that the relays will correctly decode the message
without affecting the network capacity. To summarize, we have the
following theorem.

\newtheorem{theorem}{Theorem}
\begin{theorem}
Conditioned on (\ref{int}) and (\ref{sr}), the capacity for the
successive relaying scheme can be expressed as
\begin{equation}
C_{ach}  = \frac{1}{L+1} \log _2 \left( {\det \left( {{\bf{I}} +
{\bf{HH}}^H SNR} \right)} \right) \nonumber
\end{equation}
where $\bf{H}$ denotes the channel matrix in (\ref{vstmimo}).
\end{theorem}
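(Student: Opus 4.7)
The plan is to show that the two hypotheses collapse the general achievable-rate expression (\ref{capbound}) onto its second (MIMO determinantal) term, thereby identifying the capacity with $\frac{1}{L+1}\log_2\det(\mathbf{I}+\mathbf{HH}^H SNR)$. First I would use condition (\ref{int}) to argue that at every listening slot $i+1$, the relay that is receiving can successfully decode the concurrent transmission from the other relay, subtract it, and then decode the fresh source codeword interference-free. The right-hand side of (\ref{int}) is precisely the best single-user rate any future destination-side constraint could impose on $R_i$, so whenever (\ref{int}) holds the interference constraint (\ref{interrelay1}) is automatically slack for the choice of $R_i$ we ultimately take. As a consequence, the \texttt{if}/\texttt{else} branches (\ref{ryin})--(\ref{ryin2}) collapse to the simpler per-codeword bound (\ref{interfree}).

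Second I would invoke (\ref{sr}). Since $C(\cdot)$ is monotone increasing, $|h_{S,r_i}|^2 \ge |h_{S,D}|^2+|h_{r_i,D}|^2$ makes the source-relay term in (\ref{interfree}) redundant, leaving only $R_i \le C\bigl((|h_{S,D}|^2+|h_{r_i,D}|^2)SNR\bigr)$. This is exactly the single-user constraint (\ref{rk}) supplied by the destination multiple-access region, so the remaining per-signal constraints are all automatically consistent with the MAC rate region.

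Third I would sum these individual bounds and compare the result with the determinantal sum-rate constraint (\ref{sumrk}). Since $\mathbf{HH}^H$ has diagonal entries $|h_{S,D}|^2+|h_{r_i,D}|^2$ (with the appropriate convention at the boundary rows), Hadamard's inequality (which is the form of Jensen's cited in the excerpt preceding the theorem) yields
\begin{equation}
\sum_{i=1}^L C\!\left(\bigl(|h_{S,D}|^2+|h_{r_i,D}|^2\bigr)SNR\right) \;\ge\; \log_2\det\!\left(\mathbf{I}+\mathbf{HH}^H SNR\right). \nonumber
\end{equation}
Thus the outer $\min$ in (\ref{capbound}) is attained by the MIMO term, giving the claimed expression. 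Divide by $L+1$ to account for the $L+1$ time slots per frame and the theorem follows.

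The main obstacle, and the step I would treat most carefully, is the first one: verifying that (\ref{int}) really does dominate every rate that $R_i$ could take under the subsequent steps, so that the interference term never re-emerges as the binding constraint once we impose (\ref{sr}). This amounts to checking that the minimum on the right-hand side of (\ref{int}) covers both the source-relay-limited regime and the destination-limited regime that appear in (\ref{ryin})--(\ref{ryin2}); the two cases $|h_{S,r_i}|^2 \gtrless |h_{S,D}|^2+|h_{r_i,D}|^2$ need to be handled separately, but in each case condition (\ref{sr}) ensures that the active constraint on $R_i$ is the destination-side one, which is precisely what (\ref{int}) upper-bounds.
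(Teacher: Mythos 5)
Your proposal is correct and follows essentially the same route as the paper: conditions (\ref{int}) and (\ref{sr}) collapse the per-codeword constraints of (\ref{ryin})--(\ref{ryin2}) to $R_i \le C\bigl((|h_{S,D}|^2+|h_{r_i,D}|^2)SNR\bigr)$, and the sum of these is shown to dominate the determinantal bound so that (\ref{capbound}) reduces to its second term; you even correctly identify that the inequality the paper attributes to Jensen is really Hadamard's inequality. The only cosmetic point is that Hadamard's inequality is most cleanly applied to the $L\times L$ Gram matrix $\mathbf{H}^H\mathbf{H}$ (whose diagonal entries are exactly the column norms $|h_{S,D}|^2+|h_{r_i,D}|^2$), using $\det(\mathbf{I}+\mathbf{H}\mathbf{H}^H SNR)=\det(\mathbf{I}+\mathbf{H}^H\mathbf{H}\,SNR)$, rather than to $\mathbf{H}\mathbf{H}^H$ with a boundary-row convention.
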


It should be noted that this high interference scenario (i.e.
condition (\ref{int})) is \emph{not uncommon} in reality. A
practical example is when the two relays (e.g. mobiles) are located
close to each other. If the routing techniques are developed to
choose these relays, the capacity performance can be significantly
improved by applying the proposed protocol. To satisfy condition
(\ref{sr}), an adaptive protocol can be developed from the proposed
protocol to guarantee that the relays are used only when (\ref{sr})
holds, otherwise direct transmission is assumed. However, for a
large dense network of relays, it is even not difficult to find two
relays satisfying both (\ref{int}) and (\ref{sr}). A simple example
is a fixed relay network scenario \cite{15}, where the source to
relay links are often assumed to be significantly better than the
corresponding relay to destination links and the direct link.
Therefore both (\ref{int}) and (\ref{sr}) can be met by choosing the
two nearby fixed relays. Furthermore, studies have shown that for a
large relay network where many relays exist, choosing the best one
or few relays will be preferable to using all the relays in many
situations (e.g. \cite{8,16,17,19}). Therefore it is possible that
the proposed relay protocol can be combined with relay selection
techniques to achieve an even higher capacity gain over the classic
multi-cast relay protocol, especially for high SNR conditions.

\subsection{The V-BLAST Algorithm}


In this section we apply the low-rate feedback V-BLAST minimum mean
squared error (MMSE) algorithm for detecting the signals at the
destination. The V-BLAST algorithm was initially designed for
spatial multiplexing MIMO systems \cite{blast}. For a system with
$M$ transmit and $N$ receive antennas, the message at the
transmitter is multiplexed into $M$ different signal streams, each
independently encoded and transmitted to the receiver. The receiver
uses $N$ antennas to detect and decode each signal stream by a
V-BLAST MMSE detector \cite{vblast}. The V-BLAST MMSE detection
consists of $M$ iterations, each aimed at decoding one signal
stream. For each iteration, the receiver applies the MMSE algorithm
to detect and decode the \emph{strongest} signal while treating the
other signals as interference, then subtracts it from the received
signal vector. The detection continues until all $M$ signal streams
are decoded. The Shannon capacity of this system can be achieved if
we assume that each signal is correctly decoded\cite{cdma}:
\begin{eqnarray}
C = \log _2 \det \left( {{\bf{I}} + {\bf{HH}}^H } SNR \right) =
\sum\limits_{i = 1}^M {\log _2 } \left( {1 + SINR _i } \right),
\label{4}
\end{eqnarray}
where $SINR _i$ is the output signal to interference plus noise
ratio (SINR) for signal $s_i$ in the V-BLAST detector. In order for
each signal to be correctly decoded, a low-rate feedback channel can
be used to feed the value of $SINR_i$ back to the transmitter.
Adaptive modulation and coding should be applied to make the
transmission rate for $s _i$ lower than ${\log _2 } \left( {1 + SINR
_i } \right)$.

Unlike traditional MIMO systems, when we apply this V-BLAST MMSE
detector at the destination for the proposed protocol, each signal
stream is independently encoded along the \emph{time dimension}
rather than the space dimension. When considering the rate bound
$R_i$, the same analysis should be made as in Section III. The
initialization step is the same as (\ref{ini}). For the $(i+1)$th
time slot (for $1 \le i \le L-1$), based on the same interference
cancellation criterion as in Section III, the rate calculation can
be performed as follows:

\texttt{if} $h_{R1,R2}  \succ h_{S,r_{i + 1} }$

\begin{eqnarray}
R_i  \le \min \left( {C\left( {\frac{{\left| {h_{R1,R2} } \right|^2
SNR}}{{1 + \left| {h_{S,r_{i + 1} } } \right|^2 SNR}}}
\right),R_{S,r_i } ,\log _2 \left( {1 + SINR_{r_i } } \right)}
\right),\nonumber \\ R_{S,r_{i + 1} } \le C\left( {\left| {h_{S,r_{i
+ 1} } } \right|^2 SNR } \right);
\end{eqnarray}

\texttt{else}

\begin{eqnarray}
R_i  \le \min \left( {R_{S,r_i } ,\log _2 \left( {1 + SINR_{r_i } }
\right)} \right),R_{S,r_{i + 1} }  \le C\left( {\frac{{\left|
{h_{S,r_{i + 1} } } \right|^2 SNR}}{{1 + \left| {h_{R1,R2} }
\right|^2 SNR}}} \right);
\end{eqnarray}

\texttt{end}.

In the $(L+1)$th time slot, we have
\begin{equation}
R_L  \le \min \left( {R_{S,r_L } ,\log _2 \left( {1 + SINR_{r_L } }
\right)} \right).
\end{equation}
The $SINR_{r_i }$ denotes the SINR for $s _i$, which is decoded,
encoded and forwarded by relay $r_i$. The network capacity is
therefore
\begin{equation}
C_{ach_{BLAST}}  = \frac{1}{L+1} \mathop {\max }\limits_{R_1  \cdots
R_L } \left\{ {\sum\limits_{i = 1}^L {R_i } } \right\}.
\label{capblast}
\end{equation}

The condition for interference free transmission discussed in
Section \ref{intf} can be expressed as
\begin{equation}
C\left( {\frac{{\left| {h_{R1,R2} } \right|^2 SNR}}{{1 + \left|
{h_{S,r_{i + 1} } } \right|^2 SNR}}} \right) \ge \min \left(
{C\left( {\left| {h_{S,r_i } } \right|^2 SNR} \right),\log _2 \left(
{1 + SINR_{r_i } } \right)} \right). \label{inblast}
\end{equation}
The rate for the $i$th ($1 \le i \le L$) signal under this condition
can be expressed as
\begin{equation}
R _i \le \min \left( {C\left( {\left| {h_{S,r_i } } \right|^2 SNR}
\right),\log _2 \left( {1 + SINR_{r_i } } \right)} \right).
\end{equation}
Similar to the discussion in Section \ref{intf}, we can further
apply adaptive protocols or make relay selections in the network to
enhance the source to relay links; i.e.,
\begin{equation}
C\left( {\left| {h_{S,r_i } } \right|^2 }SNR \right) \ge \log _2
\left( {1 + SINR_{r_i } } \right), \label{vblastmmse}
\end{equation}
and it is clear from (\ref{4}) that (\ref{capblast}) equals
(\ref{mimo}) under conditions (\ref{vblastmmse}) and
(\ref{inblast}). This implies that the V-BLAST algorithm can achieve
rate (\ref{mimo}) for the protocol if the interference channel
between relays and source to relay channels are sufficiently strong.

It can be seen that the conditions in (\ref{inblast}) and
(\ref{vblastmmse}) have a higher probability of being fulfilled than
those in (\ref{int}) and (\ref{sr}) due to the following
observation:
\begin{equation}
SINR_{r_i} \le \left( \left| {h_{S,D} } \right|^2  + \left| {h_{r_i
,D} } \right|^2 \right) SNR.
\end{equation}
This further implies that the conditions in (\ref{inblast}) and
(\ref{vblastmmse}) are better suited to assist the VBLAST algorithm
to achieve the rate in (\ref{mimo}), than those in (\ref{int}) and
(\ref{sr}). We note that in practice these conditions also imply a
signalling overhead among the source, relays and destination in
order to obtain the required SINR information. Furthermore, we note
that V-BLAST might be applied only to a slow fading scenario in
which the channel remains unchanged at least in every $L+1$
transmission time slots. This is due to the fact that \emph{SINR}
has to be fed back to the transmitters \emph{before} the source
starts transmitting at the beginning of the \emph{L+1} time slots.

\subsection{Comparison with Classic Protocols}

\subsubsection{Classic Protocol I}

The first classic protocol was presented by Laneman and Wornell
\cite{4}, where each message transmission is divided into three time
slots. In the first time slot, the source broadcasts the message to
the two relays and the destination. In the next two time slots, each
relay retransmits the message to the destination in turn after
decoding and re-encoding it by repetition coding. The destination
combines the signals it receives in the three time slots. The
network capacity for this protocol can be written as:
\begin{eqnarray}
C = \frac{1}{3} \times \min \Bigl( && C\left( {\left| {h_{S,R1} }
\right|^2 } SNR \right),C\left( {\left| {h_{S,R2} } \right|^2 } SNR
\right),\nonumber\\ & & C\left( \left( {\left| {h_{S,D} } \right|^2
+ \left| {h_{R1,D} } \right|^2  + \left| {h_{R2,D} } \right|^2 }
\right) SNR \right) \Bigr), \label{capcla1}
\end{eqnarray}
where the term $\frac{1}{3}$ denotes the multiplexing loss compared
with direct transmission.

\subsubsection{Classic Protocol II}

A simple improvement of Classic Protocol I is to apply distributed
Alamouti codes at the relays \cite{6}. The system uses four time
slots to transmit two signals. In the first two time slots the
source broadcasts $s _1$ and $s _2$ to both the relays and the
destination. In the next two time slots $R1$ transmits $\left[ {s_1
, - s_2^* } \right]$ and $R2$ transmits $\left[ {s_2 ,s_1^* }
\right]$. The destination uses maximal ratio combining to combine
the signals received from all four time slots in order to detect and
decode them. The capacity achieved by this protocol can be written
as
\begin{eqnarray}
C = \frac{1}{2} \times \min \Bigl( & & C\left( {\left| {h_{S,R1} }
\right|^2 } SNR \right),C\left( {\left| {h_{S,R2} } \right|^2 } SNR
\right),\nonumber\\ & & C\left( \left( {\left| {h_{S,D} } \right|^2
+ \left| {h_{R1,D} } \right|^2  + \left| {h_{R2,D} } \right|^2 }
\right) SNR \right) \Bigr). \label{capcla2}
\end{eqnarray}
It is clear that (\ref{capcla2}) outperforms (\ref{capcla1}) as it
has the same diversity gain but reduced multiplexing loss compared
with direct transmission.

In practice, both protocols can be combined with relay selection or
adaptive relaying protocols to make sure that
\begin{equation}
\min \left( {C\left( {\left| {h_{S,R1} } \right|^2 } SNR
\right),C\left( {\left| {h_{S,R2} } \right|^2 } SNR \right)} \right)
\ge C\left( \left( {\left| {h_{S,D} } \right|^2  + \left| {h_{R1,D}
} \right|^2 + \left| {h_{R2,D} } \right|^2 } \right) SNR \right)
\label{clasr}
\end{equation}
when relaying is used. The network under this condition can achieve
the best capacity performance (i.e. the third term in
(\ref{capcla1}) and (\ref{capcla2})). This result clearly mimics the
performance of a $3 \times 1$ single-input multiple-output (SIMO) or
multiple-input single-output (MISO) system.

\subsubsection{Performance Comparison}

It can be seen that if the two relays are close to each other so
that (\ref{int}) holds, then condition (\ref{sr}) is more likely to
hold than (\ref{clasr}). This implies that the best capacity
(\ref{mimo}) for the proposed protocol can be achieved with a higher
probability than that for the classic protocols. We now simply
compare the best capacities can be achieved by both proposed
protocol and Classic Protocol II:
\begin{equation}
G \buildrel \Delta \over = \frac{{\frac{1}{{L + 1}}{\rm E}\left[
{\log _2 \left( {\det \left( {{\bf{I}} + {\bf{HH}}^H SNR} \right)}
\right)} \right]}}{{0.5 \times {\rm E}\left[ {C\left( \left( {\left|
{h_{S,D} } \right|^2  + \left| {h_{R1,D} } \right|^2  + \left|
{h_{R2,D} } \right|^2 } \right) SNR \right)} \right]}},
\end{equation}
where ${\rm E}\left[ \bullet \right]$ denotes the expectation and we
assume $\left\{ {h_{a,b} } \right\}$ is a set of identically,
independent distributed (i.i.d), complex, zero mean Gaussian random
variables with unit variances. $G$ is plotted as a function of $SNR$
in Fig. \ref{succgain} for different values of $L$. It is clear that
the capacity gain increases as the value of SNR increases. Larger
values of $L$ lead to reduced multiplexing loss and offer higher
capacity gains.

\section{Diversity multiplexing tradeoff}

When the instantaneous CSI is not known to the transmitter, outages
will occur. In this scenario diversity-multiplexing tradeoff is a
powerful tool to measure the balance between the rates and error
probability. In this section we study further the diversity
multiplexing tradeoff \cite{DMT} for such a protocol. For simplicity
our analysis is based on the assumption that the signals are
correctly decoded at the relays. We note that this analysis can
provide insights on the best possible performance this scheme can
offer. We summarize our results in the following.
\begin{theorem}
Define the diversity gain $d$ and multiplexing gain $r$ as those in
\cite{DMT}. Conditioned on the relays correctly decoding the signals
(i.e, (\ref{int}) and (\ref{sr})\footnote{Note that the probability
that (\ref{int}) holds decreases as the SNR increases. Therefore,
the theorem offers an upper bound on the performance of such a
system at high SNR}), the diversity multiplexing tradeoff for the
successive relaying scheme in a slow fading scenario, where the
channel coefficients remain the same for $L+1$ time slots, can be
expressed as:
\begin{equation}
d(r)=2\left(1-\frac{L+1}{L}r\right)^{+}. \label{eq:DMT}
\end{equation}
\end{theorem}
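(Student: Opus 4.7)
The plan is to reduce the diversity--multiplexing tradeoff calculation to the outage of a single codeword, which dominates at high SNR. Under the theorem's hypothesis the relays correctly decode every source codeword, so the destination sees the Gaussian multiple-access channel (MAC) induced by the $(L+1)\times L$ matrix $\mathbf{H}$ in (\ref{vstmimo}). To realize per-slot multiplexing gain $r$, I would assign the symmetric per-codeword rate $R_i=\tfrac{L+1}{L}\,r\log\mathrm{SNR}$, so that the total information carried over $L+1$ time slots is $(L+1)\,r\log\mathrm{SNR}$, matching the multiplexing definition.

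The symmetric rate vector is achievable iff $\sum_{i\in S}R_i\le\log\det(\mathbf{I}+\mathrm{SNR}\,\mathbf{H}_S\mathbf{H}_S^H)$ for every $S\subseteq\{1,\dots,L\}$, where $\mathbf{H}_S$ is the column submatrix of $\mathbf{H}$ indexed by $S$. Taking $S=\{i\}$ reduces this to $R_i\le\log(1+\mathrm{SNR}(|h_{S,D}|^2+|h_{r_i,D}|^2))$, so outage of codeword $s_i$ requires $|h_{S,D}|^2+|h_{r_i,D}|^2\lesssim\mathrm{SNR}^{-\beta}$ with $\beta:=1-\tfrac{L+1}{L}r$. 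Writing $|h|^2\doteq\mathrm{SNR}^{-\alpha}$ in the standard SNR-exponent calculus for i.i.d.\ Rayleigh channels, this simultaneously forces $\alpha_{S,D}\ge\beta$ and $\alpha_{r_i,D}\ge\beta$, so each single-codeword outage has probability $\doteq\mathrm{SNR}^{-2\beta}$. Because the indices $r_i$ alternate between $R1$ and $R2$, the union over $i$ equals $\{\alpha_{S,D}\ge\beta\}\cap\{\min(\alpha_{R1,D},\alpha_{R2,D})\ge\beta\}$, still of exponent $2\beta$. This already gives the lower bound $P_{\mathrm{out}}\succeq\mathrm{SNR}^{-2\beta}$ and hence $d(r)\le 2\beta$.

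For the matching upper bound, I would verify that every multi-user MAC constraint ($|S|\ge 2$) contributes an outage region of SNR exponent strictly larger than $2\beta$ whenever $\beta>0$, so that a union bound preserves the exponent. Using Cauchy--Binet, $\det(\mathbf{H}_S^H\mathbf{H}_S)=\sum_T|\det\mathbf{H}_{T,S}|^2$ expands into a sum of squared minors, and because $\mathbf{H}$ is bidiagonal each such minor is a single monomial in the three independent channel gains $h_{S,D},h_{R1,D},h_{R2,D}$. Requiring the determinant to drop below $\mathrm{SNR}^{-|S|\beta}$ simultaneously constrains every such monomial, and a linear program in $(\alpha_{S,D},\alpha_{R1,D},\alpha_{R2,D})$ yields $\min\sum\alpha\ge 3\beta$ whenever $S$ contains adjacent columns or two columns sitting on both relays, while $S$ consisting of nonadjacent columns that reuse the same relay collapses into a single-user outage event already counted.

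The principal obstacle is precisely this case analysis: the $L+1$ outputs of $\mathbf{H}$ depend on only three independent channel gains, so the outage polytopes for different subsets $S$ are highly structured and one must verify carefully that none of them can produce an exponent smaller than $2\beta$. Once this is done, the matching bounds combine to $P_{\mathrm{out}}\doteq\mathrm{SNR}^{-2\beta}$, yielding $d(r)=2\bigl(1-\tfrac{L+1}{L}r\bigr)^{+}$ on $0\le r<L/(L+1)$, with the zero regime $r\ge L/(L+1)$ handled by the same calculation once $\beta\le 0$.
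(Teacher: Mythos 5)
Your proposal is correct and follows the same skeleton as the paper's proof: condition on correct relay decoding, view the destination as the multiple-access channel in (\ref{vstmimo}), write the outage event as the union over subsets of the rate constraints (\ref{rk})--(\ref{sumrk}), observe that each single-codeword constraint forces \emph{both} $|h_{S,D}|^2$ and $|h_{r_i,D}|^2$ to be small and therefore has SNR exponent $2\bigl(1-\tfrac{L+1}{L}r\bigr)^{+}$, and then verify that no multi-codeword constraint has a smaller exponent. Where you genuinely differ is in that last verification: you expand $\det(\mathbf{H}_S^H\mathbf{H}_S)$ by Cauchy--Binet into monomial minors and minimize $\alpha_{S,D}+\alpha_{R1,D}+\alpha_{R2,D}$ by a linear program over the resulting constraints, whereas the paper keeps only two terms of that expansion, namely $D_{m+1}\ge\bigl(\tfrac12\rho|h_{S,D}|^2\bigr)^m+\prod_k\bigl(1+\tfrac12\rho|h_{r_k,D}|^2\bigr)$, which already forces $v_0\ge\beta$ and $\max\{v_1,v_2\}\ge\beta$ and hence exponent at least $2\beta$ uniformly, with general subsets handled by factoring the determinant into products of such blocks. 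The paper's two-term bound is precisely what spares it the case analysis you flag as your principal obstacle; your finer expansion would in fact yield the sharper exponent $3\beta$ for the genuinely multi-user subsets, but that extra precision is not needed for the theorem. The one ingredient you assert rather than argue is that outage dominates error probability: the paper supplies this via the conditional pairwise-error-probability bound $\det\bigl(\mathbf{I}+\tfrac12\Sigma_S\Sigma_n^{-1}\bigr)^{-l}$ and a choice of codeword length $l$ large enough that non-outage error events are exponentially negligible, following \cite{7} and \cite{dmtac}; you should at least cite this step, since without it the outage calculation gives only a converse bound on $d(r)$.
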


\begin{proof}
See Appendix.
\end{proof}

As predicted in the previous section, we can see from this theorem
that a maximal diversity gain of 2 can be obtained, while the
multiplexing gain can be recovered to nearly 1 for large $L$. This
will offer a significant advantage in terms of spectral efficiency,
which will be shown through simulations in the next section. Table
\ref{table2} compares the maximal diversity and multiplexing gains
between the successive relaying protocol and the classic protocols
in a slow fading scenario. Note that for a faster fading scenario
where the channel coefficient changes in every transmission time
slot, the same theorem still holds if the signal transmitted in each
time slot is independently encoded. Furthermore, we note that it is
possible to obtain the same diversity-multiplexing tradeoff
performance if proper adaptive protocols similar to those in
\cite{3} are used to consider the conditions of the source to relay
links \cite{chao}.

\section{Simulation results }


In this section we make further comparison of the above protocols
for different network geometries in terms of achievable rates. We
compare only Classic Protocol II with the proposed protocol. As
mentioned previously, to achieve better capacity performance in
practice, the classic protocols should be combined with adaptive
protocols so that relaying is applied only if the source to relay
channels are good. There are a number of ways to enable adaptive
protocols. Three examples are to base adaptation on one of the
following conditions: (a) $\min \left( {\left| {h_{S,R1} }
\right|,\left| {h_{S,R2} } \right|} \right) \ge \left| {h_{S,D} }
\right|$, i.e. the source to relay link is better than the direct
link; (b) condition (\ref{sr}) holds; or (c) condition (\ref{clasr})
holds. Although (b) and (c) fits better with the analysis in this
paper, condition (a) is the \emph{simplest} since it does not
require knowledge of the relay to destination links. In the
following we will only adopt (a) in the simulations. I.e, if
condition (a) is not met, the system will use direct transmission.
Similar results would be obtained if condition (b) or (c) were to be
adopted instead.

Our simulations are based on three network geometries: cases I, II
and III, which are shown in Fig.\ref{model}. We assume that each
$h_{a,b}$ contains Rayleigh fading, pathloss and independent
lognormal shadowing terms. These terms can be written as $h_{a,b}  =
{v_{a,b}} \sqrt {{x_{a,b}}^{ - \gamma } 10^{\zeta _k /10}}$, where
$\left\{ {v_{a,b} } \right\}$ is a set of i.i.d. complex Gaussian
random variables with unit variances, and $x_{a,b}$ is the distance
between the nodes $a$ and $b$. The scalar $\gamma$ denotes the path
loss exponent (in this paper it is always set to 4). The lognormal
shadowing term $\zeta_k$ is a random variable drawn from a normal
distribution with a mean of $0$ dB and a standard deviation
$\delta=8$ (dB). We assume that the distance between the source and
destination is normalized to unit distance. In case I, the distances
between the source to relays and relays to destination are all
normalized, so the distance between the two relays is therefore
$\sqrt{3}$. In case II, the distance between relays is normalized,
while the distance between the source to relays and relays to
destinations is $1/\sqrt{2}$. In case III, the relays are located in
the middle region between the source and destination, so that the
distance between the source and relays is $1/2$ while the distance
between the relays is negligible compared with the source to relays
links. For the proposed protocol, these three cases represent a
meaningful tradeoff between the strength of source to relay channels
and the interference channel between the two relays.

We assume $L=7$ in the simulation, and the performance for the
proposed protocol will certainly increase as $L$ increases.
Fig.\ref{sim} shows the achievable rates for the proposed protocols
(ach rate), the capacity achieved by V-BLAST MMSE detection
(VBLAST), the classic protocols (classic) and direct transmission
(direct), all averaged over $10,000$ channel realizations. It can be
clearly seen from all three figures that the V-BLAST algorithm
approaches the capacity bounds for the protocol proposed in this
paper.

Both Fig.\ref{case1} and Fig. \ref{case2} imply that it is generally
not helpful to implement relaying protocols when the source to relay
link is about the same quality as the source to destination (direct)
link, as the link gain due to relaying is small in this case.
However, the proposed protocol still offers a performance gain over
direct transmission for both the high and low SNR regions in these
cases. Compared with case I and II, in case III the source to relay
links are much stronger, and the relays become close to each other
so that the interference is sufficiently strong to allow
interference free transmission, as discussed in Sections III and IV.
It can be clearly seen in Fig. \ref{case3} that the proposed
protocol gives a significant performance advantage over direct
transmission for both low and high SNR regions due to its combining
gain and negligible multiplexing loss. The classic protocol still
performs worse than direct transmission due to its significant
multiplexing loss compared with direct transmission, although its
performance gain over direct transmission for the low SNR region is
improved.


%

%

%

\section{Conclusions}

In this paper we have analyzed successive relaying protocol. Our
analysis shows that this protocol can maintain combining/diversity
gain while recovering the multiplexing loss associated with the
classic protocol. We have proposed the use of a low complexity
V-BLAST detection algorithm to help implement this protocol
effectively. From the simulation study based on different
geometries, we can draw two main conclusions: (a) For both the
proposed and classic protocols, the network capacity increases when
the source-relay link becomes stronger; (b) in this scenario, while
the classic protocol still loses its performance advantage for the
high SNR region, the proposed protocol can give significant
performance advantages for both the low and high SNR regions.

Note that one very important factor that impairs the capacity
performance of the proposed protocol is interference between the two
relays. Our capacity analysis does not offer the optimal capacity
results for this protocol because the optimal method of suppressing
the interference between the relays is not known in general. For the
adaptive protocol discussed in the paper, it is also worthwhile to
develop alterative forms of the protocol that explicitly account for
the impact of interference between relays on the network capacity.
Also it should be interesting to extend the analysis into a more
than two relay scenario. These are interesting topics for future
work.


%
%
\appendix[Proof of \emph{Theorem 1}]


As mentioned in Section III.E, conditioned on the event that the
relays correctly decode the message, the successive relaying
protocol mimics a multiple access MIMO channel (\ref{vstmimo}) with
a capacity constraints (\ref{rk}) - (\ref{sumrk}). For each
constraint there is a probability of not meeting it. The probability
of outage is the highest among all these probabilities. Therefore
there are $(2^L-1)$ diversity-multiplexing tradeoffs for all those
conditions and the lowest curve within the range of multiplexing
gain is the optimal tradeoff curve for the system \cite{dmtac}. To
characterize the diversity-multiplexing tradeoff achieved by each
constraint, we consider an $(m+1) \times m$ MIMO channel matrix
${\bf{H}}_m$ in the same form as in (\ref{vstmimo}). Define $v_0$ as
the exponential order \cite{7} of $1/|h_{S,D}|^2$ and $v_k$ as the
exponential order of $1/|h_{r_k,D}|^2$. Furthermore, Let
$\textbf{M}_{m+1}=\textbf{I}+\frac{1}{2} \Sigma_S \Sigma_n^{-1}$,
where $\Sigma_S$ and $\Sigma_n$ denote the covariance matrices of
the observed signal and noise components \emph{at the receiver},
respectively. We assume that each source message $s_i$ is chosen
from a Gaussian random codebook of codeword length $l$. When $m=1$,
the upper bound on the ML conditional pair-wise error probability
(PEP) can be calculated by
\begin{eqnarray}
P_{PE\left| {v_0 ,v_1 } \right.}&  \le & \det \left( {\textbf{I} +
\frac{1}{2} \Sigma_{S_1} \Sigma_{n}^{ - 1} }\right)^{-l} \nonumber\\
&  = & {}\left( {1 + \frac{1}{2}\rho \left| {h_{S,D} } \right|^2  +
\frac{1}{2}\rho \left| {h_{r_1,D} } \right|^2 } \right)^{ - l}
\nonumber \\ & \dot = & {} \rho^{-l ( \max \{1-v_0,1-v_1\})^+}
\end{eqnarray}
where $\dot =$ denotes the exponential equality \cite{DMT} and $SNR$
is replaced by $\rho$ for notational simplicity. We assume each
$s_i$ is transmitted with data rate $R$ bits in each transmission
time slot. Since the successive relaying protocol uses $(L+1)$ time
slots to transmit $L$ different symbols, the average transmission
rate is $\overline{R}=\frac{L}{L+1}R$. On assuming that average
transmission rate changes as $\overline{R}=r\log \rho$ with respect
to $\rho$, then it is easy to see $R=\frac{L+1}{L}r\log \rho$.
Therefore, we have a total of $\rho ^{\frac{L+1}{L}rl}$ codewords.
Thus, the error probability can be bounded by
\begin{equation}
P_{E|v_0,v_1} \dot \leq \rho^{-l((\max
\{1-v_0,1-v_1\})^{+}-\frac{L+1}{L}r)}.
\end{equation}
Next, we want to find the set in which the outage event always
dominates the error probability performance. The analysis regarding
this is similar to that in \cite{7} and is thus omitted here. This
set is given by
\begin{equation} \label{eq:oute}
O^+=\left\{ {(v_0,v_1) \in R^{2+}}
 \mathrel{\left | {\vphantom {(v_0,v_1) \in R^{2+}| (\max \{1-v_0,1-v_1\})^+ \leq
\frac{L+1}{L}r}}
 \right. \kern-\nulldelimiterspace}
 {(\max \{1-v_0,1-v_1\})^+ \leq
\frac{L+1}{L}r} \right\}.
\end{equation}
Then, for any error event which belongs to the non-outage set, we
can choose $l$ to make its probability sufficiently small to ensure
that the error performance is dominated by the outage probability,
which can be expressed as $\rho^{-d_o(r)}$ for
$d_o(r)=\mathop{\inf}\limits_{(v_0,v_1) \in O^{+}} (v_0+v_1)$. Now
using (\ref{eq:oute}), $d_o(r)$ can be calculated as
\begin{equation} \label{eq:DMT1}
d_o(r)=2\left(1-\frac{L+1}{L}r\right)^+
\end{equation}
which represents the diversity-multiplexing tradeoff in the case
$m=1$. When $m \geq 2$, the analysis of the determinant of
$\textbf{M}_{m+1}$ can be conducted in a way similar to that in
\cite{yang062}; so we omit the specific calculation due to limited
space. Define $D_k:=\det(\textbf{M}_{(k)})$, where
$\textbf{M}_{(k)}$ denotes a $k \times k$ sub-matrix formed by the
first $k$ rows and $k$ columns from the upper left-most corner of
$\textbf{M}$. The coefficients of $D_{m+1}$ can be calculated
recursively as
\begin{displaymath}
D_{m+1}\left(\frac{1}{2}\rho |h_{S,D}|^2\right) =
\left(\frac{1}{2}\rho |h_{S,D}|^2\right)^m + \prod \limits_{j =
1}^{m} {\left(1 + \frac{1}{2}\rho |h_{r_j ,D}|^2 \right)} +
P\left(\frac{1}{2}\rho |h_{S,D}|^2\right)
\end{displaymath}
where $P(\frac{1}{2}\rho |h_{S,D}|^2)$ is a polynomial in
$\frac{1}{2}\rho |h_{S,D}|^2$ and is always nonnegative. Thus, we
have
\begin{equation}
D_{m+1} \geq \left(\frac{1}{2}\rho |h_{S,D}|^2\right)^m + \prod
\limits_{k = 1}^{m} {\left(1 + \frac{1}{2}\rho |h_{r_k ,D}|^2
\right)}.
\end{equation}
Since we assume a slow fading environment, $v_1=v_3=\dots$ and
$v_2=v_4=\dots$. On setting $v=\max \{v_1,v_2\}$, it can be seen
that
\begin{equation} \label{eq:detvalue}
\det ({\bf I} + \Sigma _{S_m} \Sigma _{n}^{ - 1} )\dot \geq
\rho^{\max \{m(1-v_0)^+,m(1-v)^+\}}.
\end{equation}
If we define $\det ({\bf I} + \Sigma _{S_m} \Sigma _{n}^{ - 1} )
\dot = \rho^{f(v_0,v_1,v_2)}$ and
\begin{equation} \label{eq:gequ}
\rho^{\max \{m(1-v_0)^+,m(1-v)^+\}} \dot = \rho^{g(v_0,v_1,v_2)},
\end{equation}
then we have
\begin{equation} \label{eq:flg}
f(v_0,v_1,v_2) \dot \geq g(v_0,v_1,v_2), ~~~~~\forall
(v_0,v_1,v_2)\in R^{3+}.
\end{equation}
Similarly to the analysis for $m=1$, $O_f^+$ should be defined as
\begin{equation} \
O_f^+=\left\{ {(v_0,v_1,v_2) \in R^{3+}}
 \mathrel{\left | {\vphantom {(v_0,v_1,v_2) \in R^{3+} f(v_0,v_1,v_2) \leq
\frac{L+1}{L}mr}}
 \right. \kern-\nulldelimiterspace}
 {f(v_0,v_1,v_2) \leq
\frac{L+1}{L}mr} \right\}
\end{equation}
where $m$ denotes that $m$ symbols are transmitted and the
equivalent data rate $R=\frac{L+1}{L}mr\log \rho$. We define
\begin{equation} \label{eq:gdiver}
O_g^+=\left\{ {(v_0,v_1,v_2) \in R^{3+}}
 \mathrel{\left | {\vphantom {(v_0,v_1,v_2) \in R^{3+} g(v_0,v_1,v_2) \leq
\frac{L+1}{L}mr}}
 \right. \kern-\nulldelimiterspace}
 {f(v_0,v_1,v_2) \leq
\frac{L+1}{L}mr} \right\}.
\end{equation}
Because of (\ref{eq:flg}), it can be seen that $O_f^+ \subseteq
O_g^+$. Therefore
\[
\mathop{\inf}\limits_{(v_0,v_1,v_2) \in O_f^{+}} (v_0+v_1+v_2) \geq
\mathop{\inf}\limits_{(v_0,v_1,v_2) \in O_g^{+}} (v_0+v_1+v_2),
\]
which means that the diversity gain calculated from $O_f^+$ is
always larger than that calculated from $O_g^+$.

From (\ref{eq:gequ}) and (\ref{eq:gdiver}), it is not difficult to
show that
\begin{equation} \label{eq:DMT2}
\mathop{\inf}\limits_{(v_0,v_1,v_2) \in O_g^{+}} (v_0+v_1+v_2) \ge
2\left(1-\frac{L+1}{L}r \right)^+.
\end{equation}
Comparing (\ref{eq:DMT1}) and (\ref{eq:DMT2}), we can see that the
diversity gain achieved by a multiple access MIMO channel with
channel matrix $\textbf{H}_m$ $(m>1)$is always larger than that for
$\textbf{H}_1$.

Now we consider the product of the determinants of $n$ matrices
$\prod\limits_{i = 1}^n {\det({\bf I} + \frac{1}{2} \Sigma _{S_{m_i}
} \Sigma _{n}^{ - 1} )}$, which is related to all other rate
constraints from (\ref{rk}) - (\ref{sumrk}). Using
(\ref{eq:detvalue}), it is easy to obtain
\[
\prod\limits_{i = 1}^n {\det\left({\bf I} + \frac{1}{2} \Sigma
_{S_{m_i} } \Sigma _{n}^{ - 1} \right)} \dot \geq \rho^{\max
\{(\sum_{i=1}^n m_i)(1-v_0)^+,(\sum_{i=1}^n m_i)(1-v)^+\}}
\]
Define $\rho^{f_n(v_0,v_1,v_2)} \dot = \prod\limits_{i = 1}^n
{\det({\bf I} + \frac{1}{2} \Sigma _{S_{m_i} } \Sigma _{n}^{ - 1}
)}$ and $\rho^{g_n(v_0,v_1,v_2)} \dot = \rho^{\max \{(\sum_{i=1}^n
m_i)(1-v_0)^+,(\sum_{i=1}^n m_i)(1-v)^+\}}$. It can be seen that
\begin{equation}
f_n(v_0,v_1,v_2) \dot \geq g_n(v_0,v_1,v_2), ~~~~~\forall
(v_0,v_1,v_2)\in R^{3+}.
\end{equation}
Similarly, applying $O_{g_n}^+=\{(v_0,v_1,v_2) \in R^{3+}|
g(v_0,v_1,v_2) \leq \frac{L+1}{L}(\sum_{i=1}^n m_i)r \}$, we have
that
\[
\mathop{\inf}\limits_{v_0,v_1,v_2 \in O_{f_n}^{+}} (v_0+v_1+v_2)
\geq 2\left(1-\frac{L+1}{L}r\right)^{+}.
\]
The determinant of the matrix $({\bf I} + \frac{1}{2} \Sigma
_{S}\Sigma _{n }^{ - 1})$ can always be decomposed into the product
of the determinants of several submatrices $({\bf I} + \frac{1}{2}
\Sigma _{S_{mi}}\Sigma _{n}^{ - 1})$. Therefore the error exponent
is always larger than or equal to $2(1-\frac{L+1}{L}r)^{+}$ and the
proof is complete.





\renewcommand{\baselinestretch}{1}

\begin{thebibliography}{1}

\bibitem{1}
A. Sendonaris, E. Erkip and B. Aazhang, ``User cooperative
diversity-Part I and II,'' \emph{IEEE Trans. Commun}, vol. 51, no.
11, pp. 1927-1938, Nov. 2003.

\bibitem{3}
J. N. Laneman, D. N. C. Tse and G. W. Wornell, ``Cooperative
diversity in wireless networks: Efficient protocols and outage
behavior,'' \emph{IEEE Trans. Inf. Theory}, vol. 50, no. 12, pp.
3062-3080, December 2004.

\bibitem{4}
J. N. Laneman and G. W. Wornell, ``Distributed space-time-coded
protocols for exploiting cooperative diversity in wireless
networks,'' \emph{IEEE Trans. Inf. Theory}, vol. 49, pp. 2415-2425,
Oct. 2003.

\bibitem{5}
R. U. Nabar, \emph{et al.} , ``Fading relay channels: Performance
limits and space-time signal design,'' \emph{IEEE J. Sel. Areas
Comm.}, vol. 22, no. 6, pp. 1099-1109, Aug. 2004.

\bibitem{se1}
B. Rankov and A. Wittneben, ``Spectral efficient signaling for
half-duplex relay channels,'' in \emph{Proc. Asilomar Conf. Signals,
syst., comput.}, Pacific Grove, CA, Nov.,2005.

\bibitem{se2}
B. Rankov and A. Wittneben, ``Spectral efficient protocols for
non-regenerative half-duplex relaying,'' in \emph{Proc. 43th Annual
Allerton Conf. Comm., Contr. and Comp.}, Monticello, IL, Oct. 2005.

\bibitem{10}
E Zimmermann, P. Herhold and G. Fettweis, ``On the performance of
cooperative diversity protocols in practical wireless systems,''
\emph{Proc. 2003 IEEE Vehicular Tech. Conf. - Fall}, Orlando,
Florida, Oct. 2003.

\bibitem{6}
P.A. Anghel, G. Leus and M. Kavehl, ``Multi-user space-time coding
in cooperative networks,'' \emph{Proc. 2003 IEEE Int'l. Conf.
Acoustic, Speech \& Sig. Processing}, Hong Kong, China, Apr. 2003.

\bibitem{7}
K. Azarian, H. El Gamal, and P. Schniter, ``On the achievable
diversity-multiplexing tradeoff in half-duplex cooperative
channels,'' \emph{IEEE Trans. Inf. Theory}, vol 51, no. 12 pp.
4152-4172, Dec. 2005.

\bibitem{8}
Y. Zhao, R. Adve and T. J. Lim, ``Improving amplify-and-forward
relay networks: Optimal power allocation versus selection,''
submitted to \emph{IEEE. Trans. Wireless. Commun.}, Feb. 2006.

\bibitem{11}
Y. Zhao, R. Adve and T. J. Lim, ``Outage probability at arbitrary
SNR with cooperative diversity,'' \emph{IEEE Commun. Lett.}, vol 9,
no. 8, pp. 700-702, Aug. 2005.

\bibitem{9}
M. Janani, \emph{et al.}, ``Coded cooperation in wireless
communications: space-time transmission and iterative decoding,''
\emph{IEEE. Trans. Signal Process.}, vol. 52, no. 2, pp. 362-371,
Feb. 2004.

\bibitem{12}
M. Costa, ``On the Gaussian interference channel,'' \emph{IEEE.
Trans. Inf. Theory}, vol. 31, no. 5, pp. 607-615, Sep. 1985.

\bibitem{13}
B. Suard, \emph{et al.}, ``Uplink channel capacity of
space-division-multiple-access schemes,'' \emph{IEEE Trans. Inf.
Theory}, vol. 44, no. 4, pp. 1468-1476, Jul. 1998.

\bibitem{14}
T. Cover and J. A. Thomas, \emph{Elements of Information Theory},
Wiley: New York, 1991.

\bibitem{15}
R. Pabst, \emph{et al.}, ``Relay-based deployment concepts for
wireless and mobile broadband radio,'' \emph{IEEE Commun. Mag.},
vol. 42, no. 9, pp. 80-89, Sep. 2004.

\bibitem{16}
Y. Fan and J. S. Thompson, ``MIMO configurations for relay channels:
Theory and practice,'' submitted to \emph{IEEE Trans. Wireless.
Commun.}, to appear, 2007.

\bibitem{17}
A. Bletsas, A. Lippnian, D. P. Reed, ``A simple distributed method
for relay selection in cooperative diversity wireless networks,
based on reciprocity and channel measurements,'' \emph{Proc. 2005
IEEE Vehicular Tech. Conf. - Spring}, Stockholm, May - Jun. 2005.

\bibitem{18}
J. Cho and Z. J. Haas, ``On the throughput enhancement of the
downstream channel in cellular radio networks through multihop
relaying,'' \emph{IEEE. J. Sel. Areas Commun.}, vol. 22, no. 7, pp.
1206-1219, Sep. 2004.

\bibitem{19}
A. Bletsas, \emph{et al.}, ``A simple cooperative diversity method
based on network path selection,'' \emph{IEEE J. Sel. Areas Commun.}
vol. 24, no. 3, pp. 659-672, Mar. 2006.

\bibitem{blast}
G. J. Foschini, \emph{et al.}, "Analysis and performance of some
basic space-time architectures," \emph{IEEE J. Sel. Areas Commun.},
Vol. 21, No. 3, pp. 303-320, Apr. 2003.

\bibitem{vblast}
G. J. Foschini, \emph{et al.}, ``Simplified processing for high
spectral efficiency wireless communications employing multi-element
arrays,'' \emph{IEEE J. Sel. Areas Commun.}, vol. 17, no.11, pp.
1841-1852, 1999.

\bibitem{cdma}
S. Verdu and S. Shamai,``Spectral efficiency of CDMA with random
spreading,'' \emph{IEEE Trans. Inf. Theory}, vol. 45, no. 2, pp.
622-640, Mar. 1999.

\bibitem{anghel06}
P. A. Anghel and M. Kaveh, ``On the performance of distributed
space-time coding systems with one or two non-regenerative relays,''
\emph{IEEE Trans. Wireless Comm.}, vol. 5, no. 3, pp. 682-692, Mar.
2006.

\bibitem{janani04}
M. Janani, \emph{et al}., ``Coded cooperation in wireless
communications: space-time transmission and iterative decoding,''
\emph{IEEE. Trans. Signal Process.}, vol. 52, no. 2, pp. 362?C371,
Feb. 2004.

\bibitem{stefanov05}
A. Stefanov and E. Erkip, ``Cooperative coding for wireless
networks,'' \emph{IEEE Trans. Commun.}, vol. 52, no. 9, pp.
1470-1476, Sept. 2005.

\bibitem{hunter06}
T. E. Hunter and A. Nosratinia, ``Diversity through coded
cooperation,'' \emph{IEEE Trans. Wireless Comm.}, vol. 5, no. 2, pp.
283-289, Feb. 2006.

\bibitem{razaghi06}
P. Razaghi and W. Yu, ``Bilayer low-density parity-check codes for
decode-and-forward in relay channels,'' submitted to \emph{IEEE
Trans. Inf. Theory}, September, 2006.

\bibitem{wittneben03}
A. Wittneben and B. Rankov, ``Impact of cooperative relays on the
capacity of rank-deficient MIMO channels,'' \emph{Proc. 12th IST
Summit on Mobile and Wireless Commun.}, Aveiro, Portugal, pp.
421-425, June 2003.

\bibitem{Oechtering04}
T. Oechtering and A. Sezgin, ``A new cooperative transmission scheme
using the space-time delay code,'' \emph{Proc. ITG Workshop Smart
Antenna}, pp. 41-48, Zurich, Mar. 2004.

\bibitem{yang061}
S. Yang, J.-C. Belfiore, ``Towards the optimal amplify and forward
cooperative diversity scheme,'' presented at the \emph{35th IEEE
Comm. Theory. Workshop}, Dorado, Puerto Rico, May 2006.

\bibitem{yang062}
S. Yang, J.-C. Belfiore, ``On slotted amplify-and-forward
cooperative diversity schemes,'' \emph{Proc. 2006 IEEE Int'l. Sym.
Inf. Theory}, Seattle, USA, July 2006.

\bibitem{DMT}
L. Zheng and D. Tse, ``Diversity and multiplexing: A fundamental
tradeoff in multiple antenna channels,'' \emph{IEEE Trans. Inf.
Theory}, vol. 49, no. 5, pp. 1073-1096, May 2003.

\bibitem{ma}
R. A. Horn and C. R. Johnson, \emph{Matrix Analysis}, Cambridge
University Press, Cambridge, UK, 1985.

\bibitem{dmtac}
D. Tse, P. Viswanath and L. Zheng, ``Diversity multiplexing tradeoff
in multiple access channels,'' \emph{IEEE Trans. Inf. Theory}, vol.
50, no. 9, pp. 1859 - 1874, Sep 2004.

\bibitem{chao}
C. Wang, Y. Fan, J. S. Thompson and H. V. Poor, ``Adaptive protocols
for relay channels,'' in preparation.

\end{thebibliography}
%

%

\begin{figure}[p!]
\subfigure[Time slot 1.]{\includegraphics[width=2.5in]{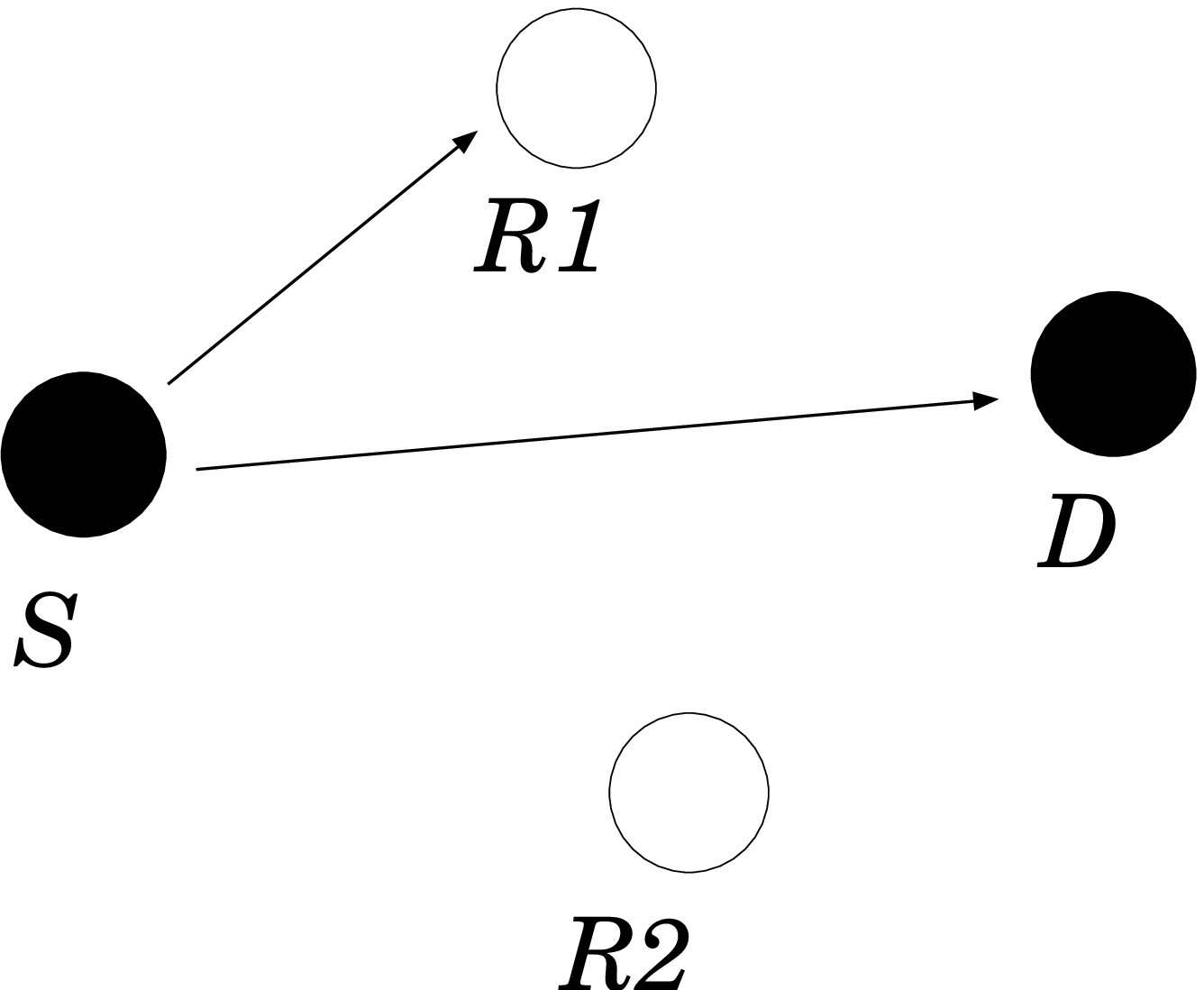}
\label{slot1}} \hfil \subfigure[Time slot
2.]{\includegraphics[width=2.5in]{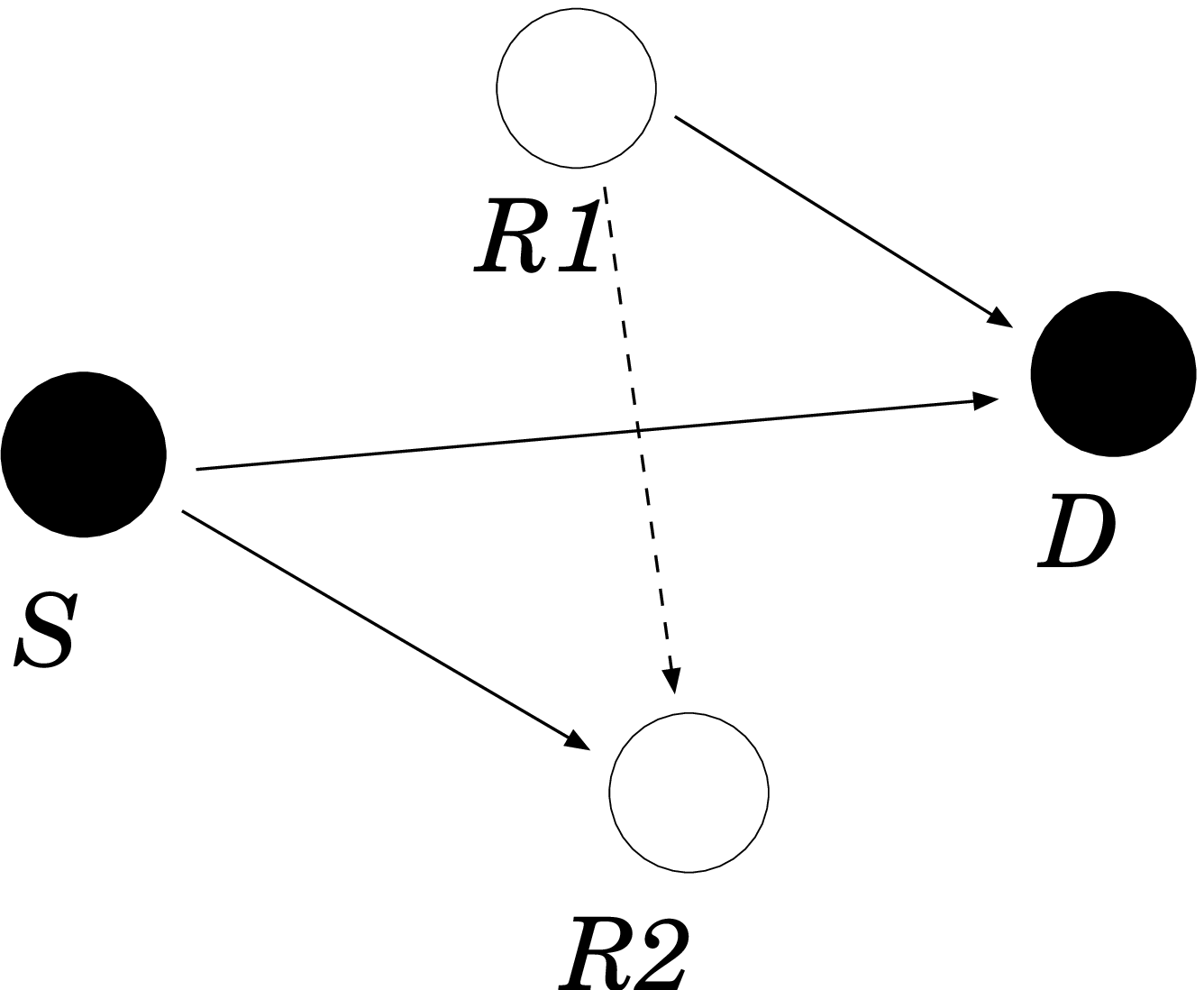}
\label{slot2}} \hfil \subfigure[Time slot
3.]{\includegraphics[width=2.5in]{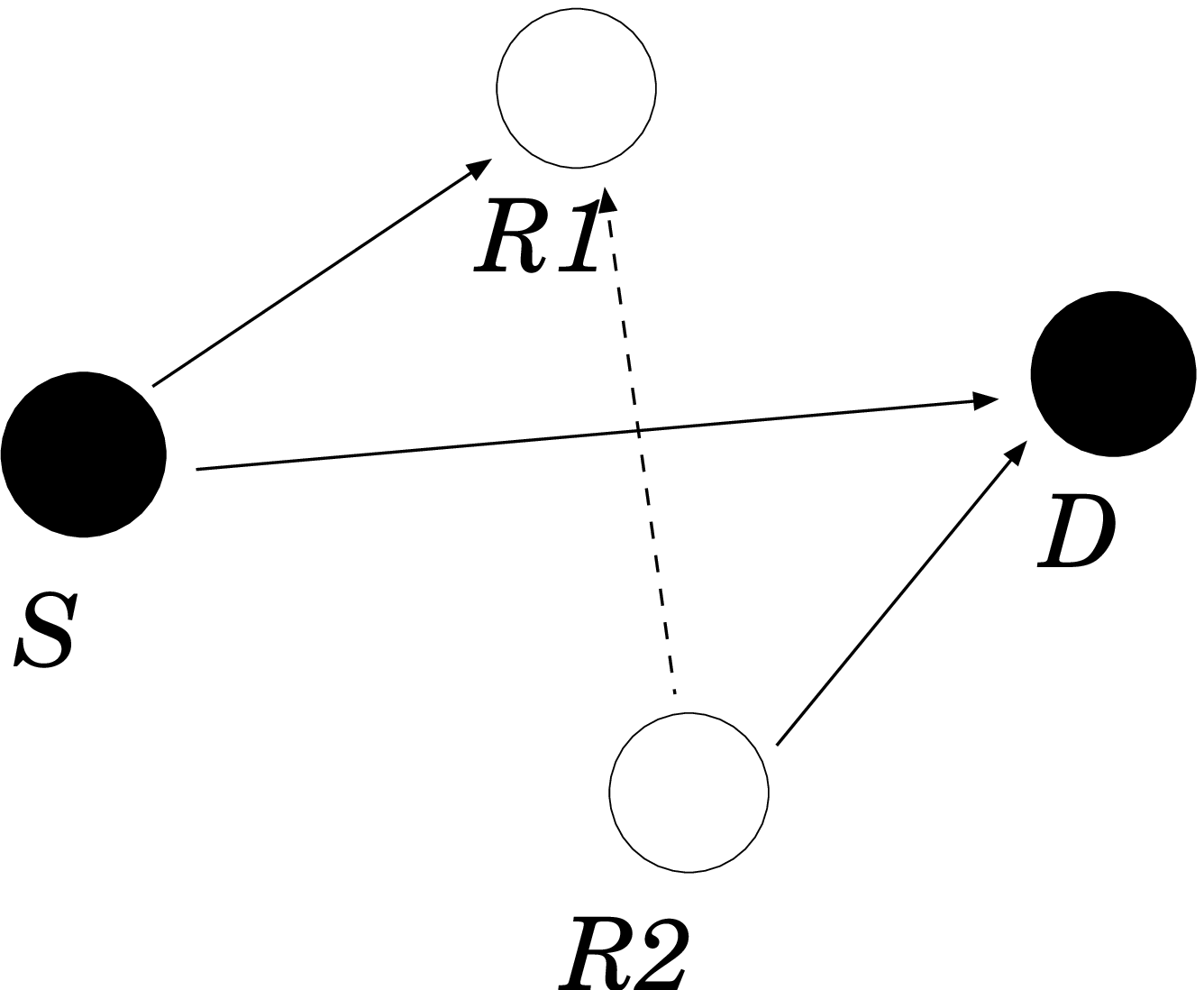}
\label{slot3}} \caption{Transmission schedule for the proposed
protocol.} \label{slot}
\end{figure}

\begin{table}[p!]
  \centering
\begin{tabular}{|c|c|c|}
  \hline
  Schemes/Maximum Gain & Multiplexing & Diversity \\
  \hline
  Direct transmission & 1 & 1 \\
  \hline
  Classic I & 1/3 & 3 \\
  \hline
  Classic II & 1/2 & 3 \\
  \hline
  Proposed scheme & $L/(L+1)$ & 2\\
  \hline
\end{tabular}
\caption{Comparison of the different transmission schemes for the
two relay case} \label{table2}
\end{table}

\begin{figure}[p!]
\centering
\includegraphics[width=3.5in]{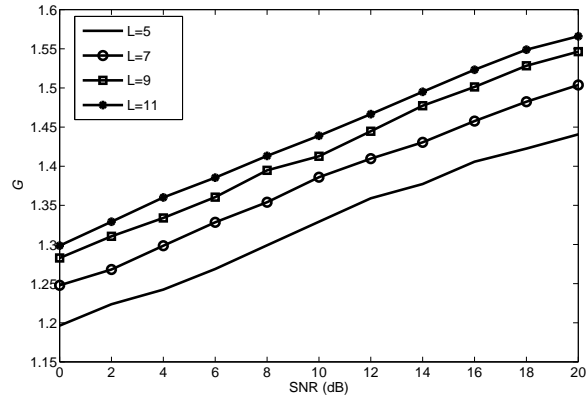}
\caption{Capacity gain of the proposed protocol over classic
protocol II.} \label{succgain}
\end{figure}

\begin{figure}[p!]
\subfigure[Case I]{\includegraphics[width=1.5in]{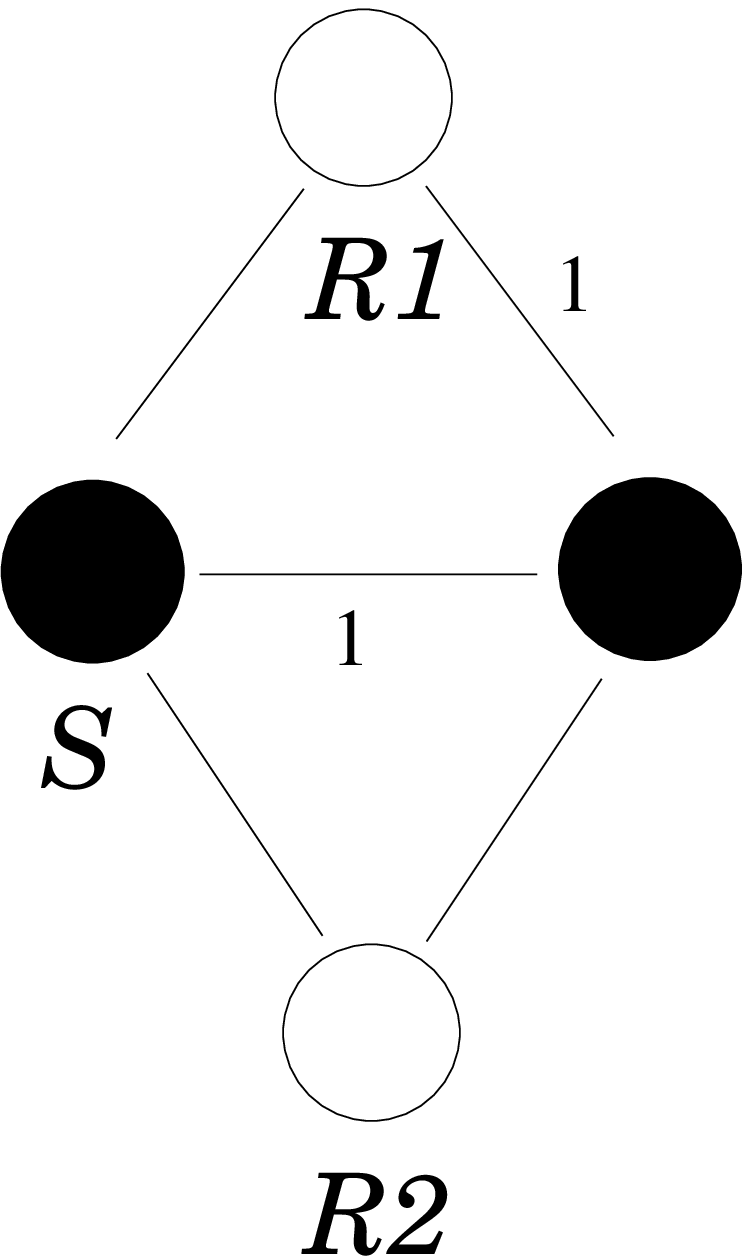}
\label{case1}} \hfil \subfigure[Case
II]{\includegraphics[width=1.8in]{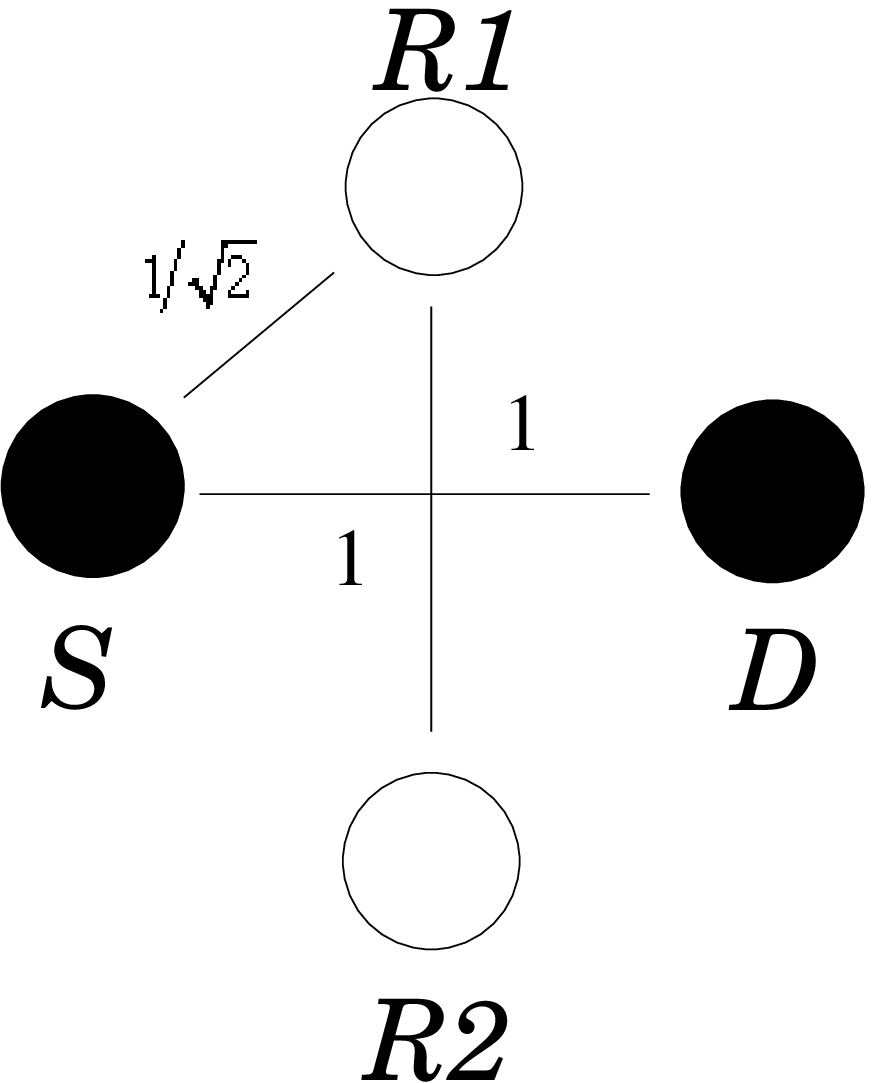}
\label{case2}} \hfil \subfigure[Case
III]{\includegraphics[width=2.5in]{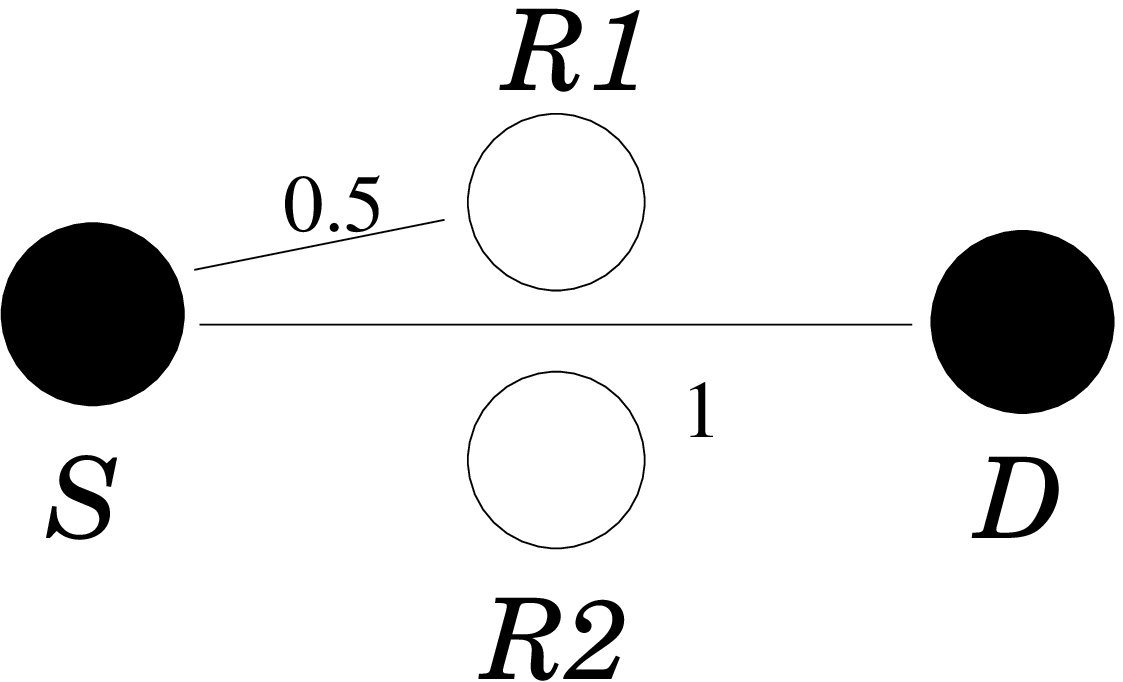}
\label{case3}} \caption{Network models for different geometries.}
\label{model}
\end{figure}

\begin{figure}[p!]
\subfigure[Case I]{\includegraphics[width=3.5in]{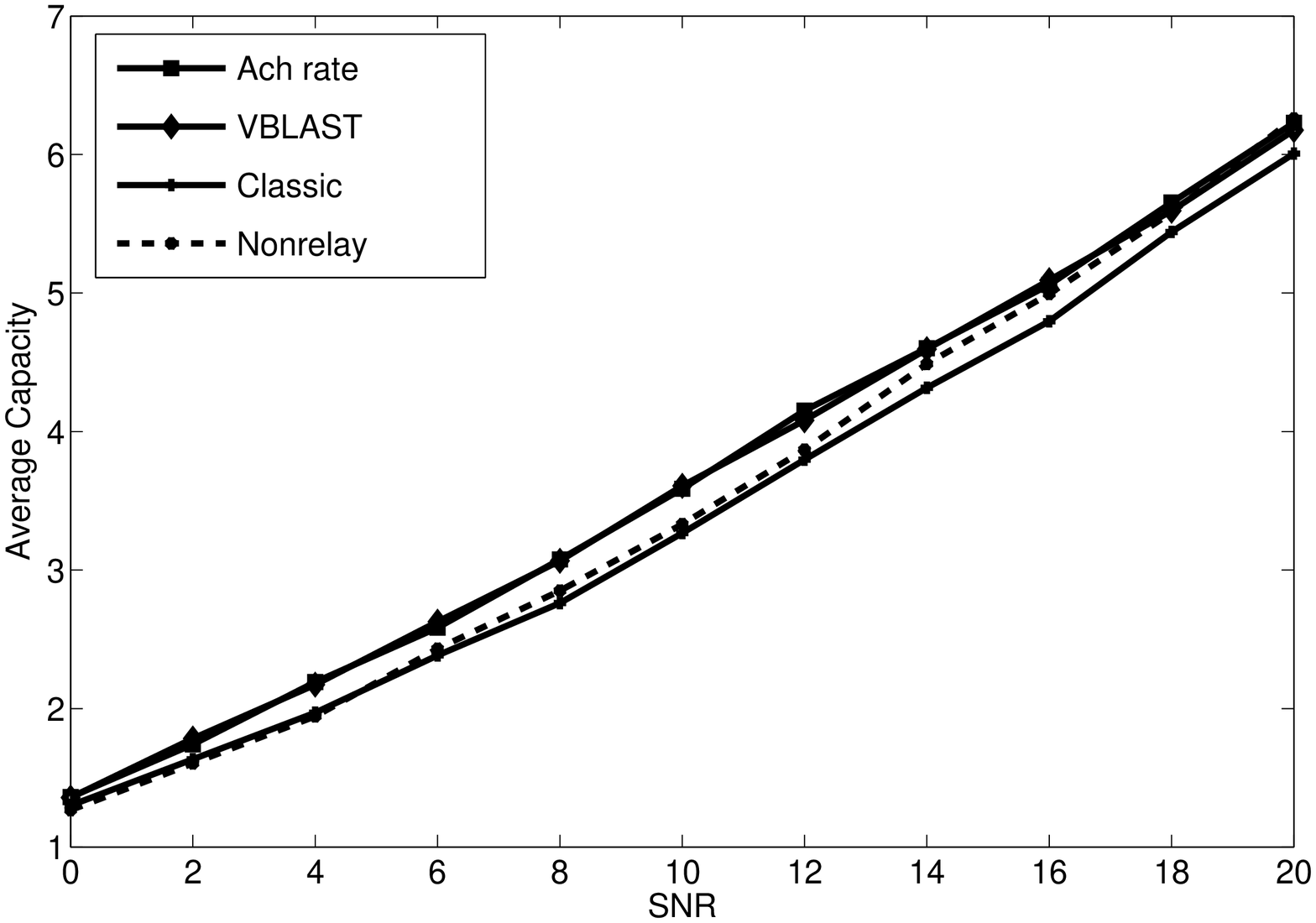}
\label{case1}} \hfil \subfigure[Case
II]{\includegraphics[width=3.5in]{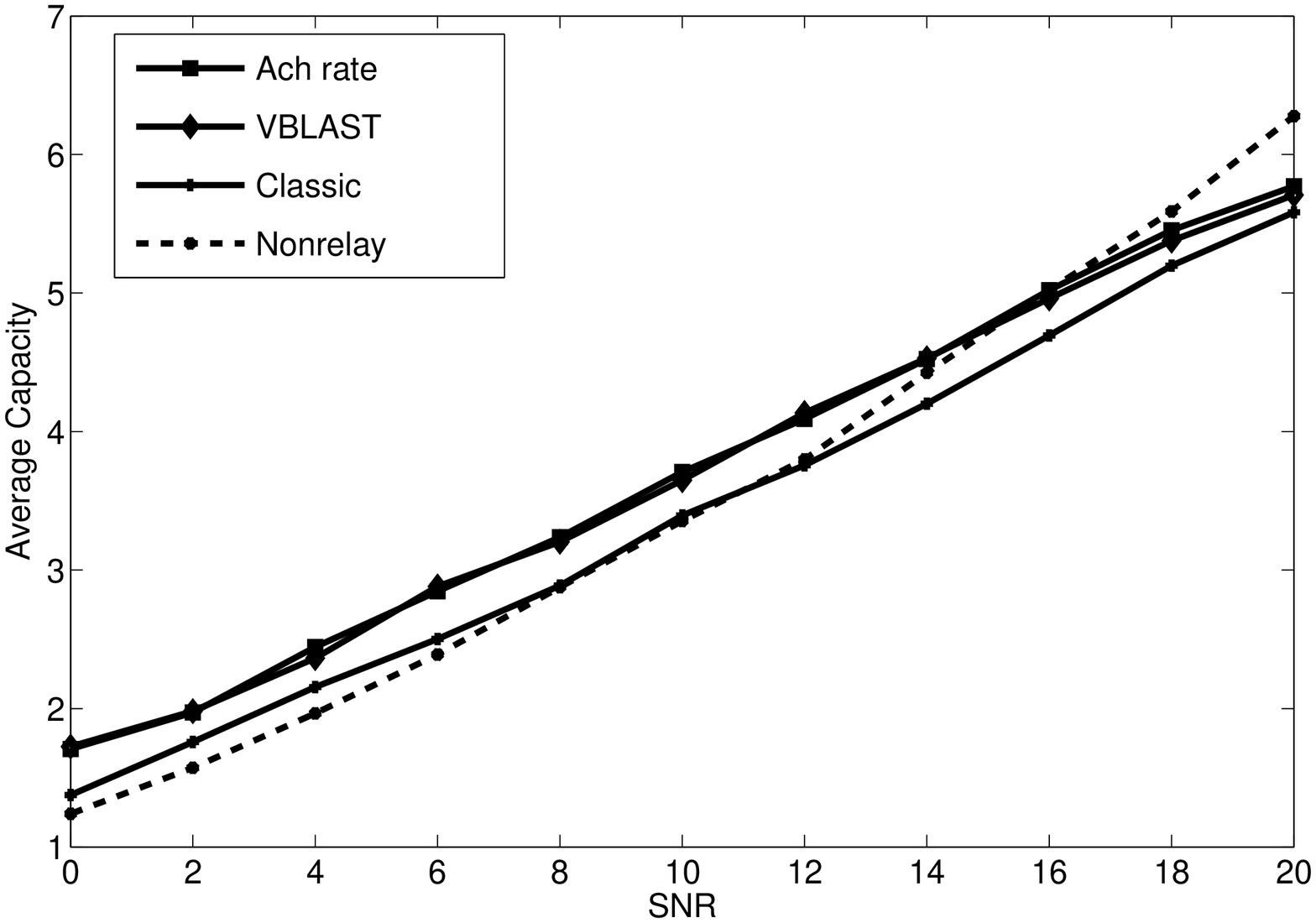}
\label{case2}} \hfil \subfigure[Case
III]{\includegraphics[width=3.5in]{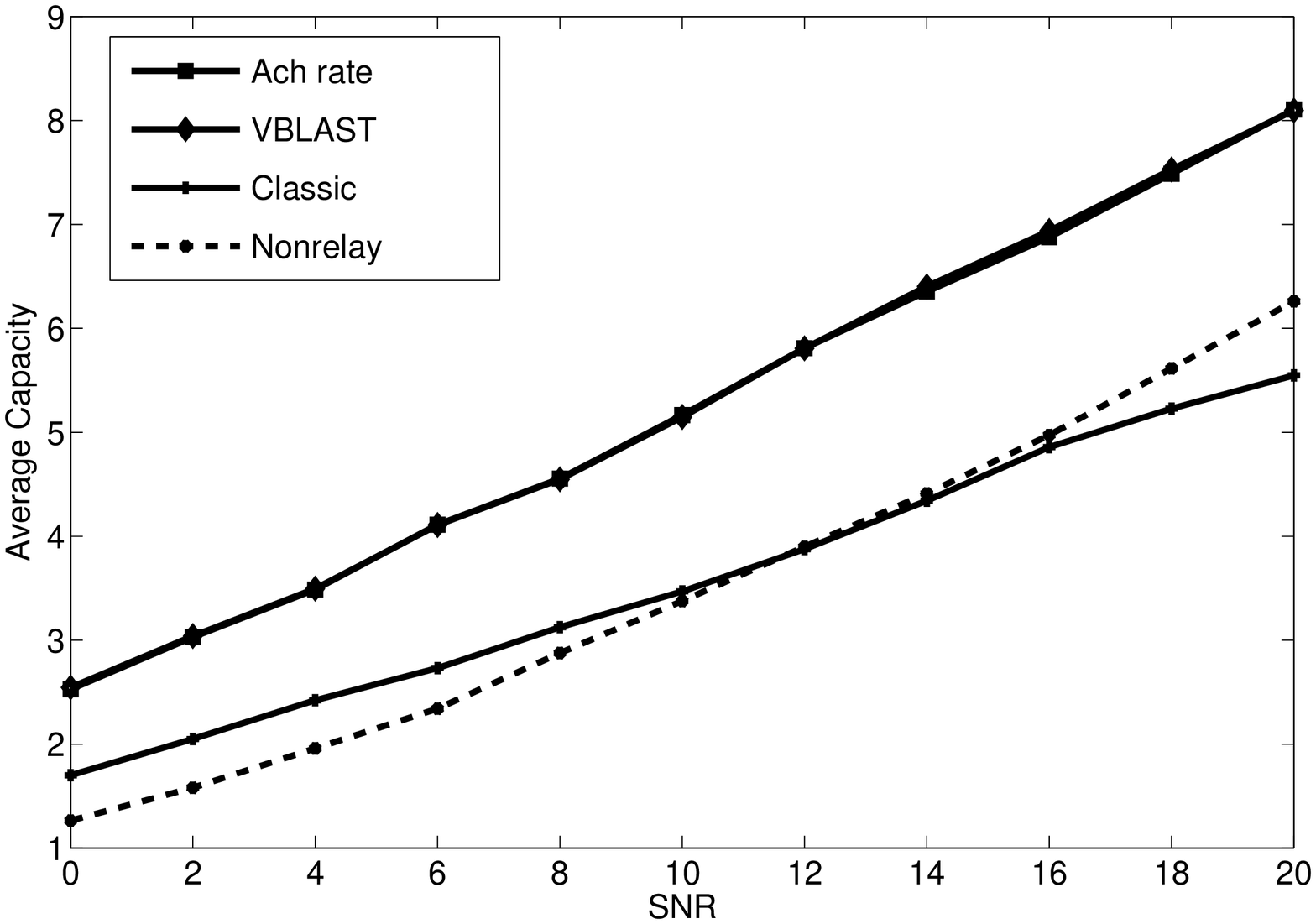}
\label{case3}} \caption{Average capacity of the network for
different network geometries in bits per transmission time slot.}
\label{sim}
\end{figure}






\end{document}